\theoremstyle{plain}
\numberwithin{equation}{section}
\newtheorem{thm}{Theorem}[section]
\newtheorem{lem}[thm]{Lemma}
\newtheorem{cor}[thm]{Corollary}
\newenvironment{exam}[1]
{\begin{flushleft}\textbf{Example #1}.\enspace}%
{\end{flushleft}}
\newcommand{\tbullet}{\mathrel{\raise .4ex\hbox{\tiny$\bullet$}}} % 5.8.20 THIS for larger cdot as times
\newcommand{\rmtr}{\mathrm{tr\,}}
\newcommand{\rmin}{\mathrm{In\,}}
\newcommand{\ascript}{\mathcal{A}}
\newcommand{\cscript}{\mathcal{C}}
\newcommand{\escript}{\mathcal{E}}
\newcommand{\iscript}{\mathcal{I}}
\newcommand{\jscript}{\mathcal{J}}
\newcommand{\lscript}{\mathcal{L}}
\newcommand{\mscript}{\mathcal{M}}
\newcommand{\oscript}{\mathcal{O}}
\newcommand{\pscript}{\mathcal{P}}
\newcommand{\sscript}{\mathcal{S}}
\newcommand{\alphahat}{\widehat{\alpha}}
\newcommand{\betahat}{\widehat{\beta}}
\newcommand{\iscripthat}{\widehat{\iscript}}
\newcommand{\jscripthat}{\widehat{\jscript}}
\newcommand{\mscripthat}{\widehat{\mscript}}
\newcommand{\alphatilde}{\widetilde{a}}
\newcommand{\capctilde}{\widetilde{C}}
\newcommand{\ab}[1]{\left|#1\right|}
\newcommand{\doubleab}[1]{\left|\left|#1\right|\right|}
\newcommand{\brac}[1]{\left\{#1\right\}}
\newcommand{\paren}[1]{\left(#1\right)}
\newcommand{\sqbrac}[1]{\left[#1\right]}
\newcommand{\elbows}[1]{{\left\langle#1\right\rangle}}
\newcommand{\ket}[1]{{\left|#1\right>}}
\newcommand{\bra}[1]{{\left<#1\right|}}
\begin{document}

\title{PARTS AND COMPOSITES\\OF\\QUANTUM SYSTEMS}
\author{Stan Gudder\\ Department of Mathematics\\
University of Denver\\ Denver, Colorado 80208\\
sgudder@du.edu}
\date{}
\maketitle

\begin{abstract}
We consider three types of entities for quantum measurements. In order of generality, these types are: observables, instruments and measurement models. If $\alpha$ and $\beta$ are entities, we define what it means for $\alpha$ to be a part of $\beta$. This relationship is essentially equivalent to $\alpha$ being a function of $\beta$ and in this case $\beta$ can be employed to measure $\alpha$. We then use the concept to define coexistence of entities and study its properties. A crucial role is played by a map $\alphahat$ which takes an entity of a certain type to one of lower type. For example, if $\iscript$ is an instrument, then $\iscripthat$ is the unique observable measured by $\iscript$. Composite systems are discussed next. These are constructed by taking the tensor product of the Hilbert spaces of the systems being combined. Composites of the three types of measurements and their parts are studied. Reductions of types to their local components are discussed. We also consider sequential products of measurements. Specific examples of L\"uders, Kraus and trivial instruments are used to illustrate various concepts. We only consider finite-dimensional systems in this article.
\end{abstract}

\section{Introduction}  % Section 1
Two important operations on quantum systems are the formations of parts and composites. In a rough sense, these operations are opposites to each other. The parts of a measurement $\alpha$ are smaller components of $\alpha$ in the sense that they can be simultaneously measured by $\alpha$. A composite system is a combination of two or more other systems. This combination is formed using the tensor product $H=H_1\otimes H_2$ where $H_1$ and $H_2$ are the Hilbert spaces describing two subsystems. The composite system contains more information than the individual systems because $H$ describes how $H_1$ and $H_2$ interact. We can reduce measurements on $H$ to simpler ones on $H_1$ and $H_2$ but information is lost in the process.

Section~2 presents the basic definitions that are needed in the sequel. Three types of quantum measurements are considered. In order of generality, these types are: observables, instruments and measurement models. At the basic level is an observable $A$ which is a measurement whose outcome probabilities $\rmtr (\rho A_x)$ are determined by the state $\rho$ of the system. At the next level is an instrument $\iscript$. We think of $\iscript$ as an apparatus that can be employed to measure an observable $\iscripthat$. Although $\iscripthat$ is unique, there are many instruments that can be used to measure an observable. Moreover, $\iscript$ gives more information than $\iscripthat$ because, depending on the outcome $x$, $\iscript$ updates the input state $\rho$ to the output state $\iscript _x(\rho )/\rmtr (\rho\iscripthat _x)$. At the highest level is a measurement model $\mscript$ that measures a unique instrument $\mscripthat$. Again, there are many measurement models that measure an instrument and $\mscript$ contains more detailed information. For conciseness, we call these types of instruments \textit{entities}. We should mention that all the quantum systems in this article are assumed to be finite-dimensional.

Section~3 considers system parts. If $\alpha$ and $\beta$ are entities, we define what it means for $\alpha$ to be a \textit{part} of $\beta$ and when this is the case, we write $\alpha\to\beta$. If $\alpha\to\beta$ and $\beta\to\alpha$, we say that $\alpha$ and $\beta$ are
\textit{equivalent}. We show that $\alpha\to\beta$ implies $\alphahat\to\betahat$ and that $\to$ is a partial order to within equivalence. The relation $\alpha\to\beta$ is the same as $\alpha$ being a function of $\beta$ or $\betahat$ and in this case, $\beta$ can be employed to measure $\alpha$. We then use this concept to define coexistence of entities and study its properties. We show that joint measurability is equivalent to coexistence. We then introduce sequential products of observables and use this concept to illustrate parts of entities.

Section~4 discusses composite systems. These are constructed by taking the tensor product $H=H_1\otimes H_2$ where $H_1,H_2$ are the Hilbert spaces of the systems being combined. Composites of the three types of measurements and parts of these composites are studied. Reductions of types into their local components are discussed. Specific examples of L\"uders, Kraus and trivial instruments are employed to illustrate various concepts.

\section{Basic Definitions}  % Section 2
This section discusses the basic concepts and definitions that are needed in the sequel. Since these ideas are well developed in the literature \cite{bgl95,fhl18,hz12,kra83,nc00}, we shall proceed quickly and leave details and motivation to the reader's discretion. In this article we shall only consider finite-dimensional complex Hilbert spaces $H$. Let $\lscript (H)$ be the set of linear operators on $H$. For $S,T\in\lscript (H)$ we write $S\le T$ if $\elbows{\phi ,S\phi}\le\elbows{\phi ,T\phi}$ for all $\phi\in H$. We define the set of \textit{effects} by
\begin{equation*} 
\escript (H)=\brac{a\in\lscript (H)\colon 0\le a\le 1}
\end{equation*}
where $0,1$ are the zero and identity operators, respectively. Effects correspond to yes-no measurements and when the result of measuring
$a$ is yes, we say that $a$ \textit{occurs}. The complement of $a\in\escript (H)$ is $a'=1-a$ and $a'$ occurs if and only if $a$ does not occur. A one-dimensional projection $P_\phi =\ket{\phi}\bra{\phi}$, where $\doubleab{\phi}=1$ is an effect called an \textit{atom}. We call
$\rho\in\escript (H)$ a \textit{partial state} if $\rmtr(\rho )\le 1$ and $\rho$ is a \textit{state} if $\rmtr (\rho )=1$. We denote the set of partial states by $\sscript _p(H)$ and the set of states by $\sscript (H)$. If $\rho\in\sscript (H)$, $a\in\escript (H)$, we call $\pscript _\rho (a)=\rmtr (\rho a)$ the \textit{probability that} $a$ \textit{occurs} in the state $\rho$ \cite{bgl95, hz12,nc00}. For $a,b\in\escript (H)$, their \textit{sequential product} is the effect $a\circ b=a^{1/2}ba^{1/2}$ where $a^{1/2}$ is the unique square root of $a$ \cite{gg02,gn01,gud120}. We interpret $a\circ b$ as the effect that results from first measuring $a$ and then measuring $b$. We also call $a\circ b$ the effect $b$ \textit{conditioned on} the effect $a$ and write $(b\mid a)=a\circ b$.

Let $\Omega _A$ be a finite set. A (finite) \textit{observable with outcome-space} $\Omega _A$ is a subset
\begin{equation*} 
A=\brac{A_x\colon x\in\Omega _A}\subseteq\escript (H)
\end{equation*}
satisfying $\sum\limits _{x\in\Omega _A}A_x=1$. We denote the set of observables on $H$ by $\oscript (H)$. If
$B=\brac{B_y\colon y\in\Omega _B}$ is another observable, we define the \textit{sequential product} $A\circ B\in\oscript (H)$ \cite{gud120,gud220,gud320} to be the observable with outcome-space $\Omega _A\times\Omega _B$ given by
\begin{equation*} 
A\circ B=\brac{A_x\circ B_y\colon (x,y)\in\Omega _A\times\Omega _B}
\end{equation*}
We also define the observable $B$ \textit{conditioned by} $A$ as
\begin{equation*} 
(B\mid A)=\brac{(B\mid A)_y\colon y\in\Omega _B}\subseteq\escript (H)
\end{equation*}
where $(B\mid A)_y=\sum\limits _{x\in\Omega _A}(A_x\circ B_y)$. If $A\in\oscript (H)$ we define the \textit{effect-valued measure} (or POVM) $X\to A_X$ from $2^{\Omega _A}$ to $\escript (H)$ by $A_X=\sum\limits _{x\in X}A_x$ and we also call $X\mapsto A_X$ an \textit{observable} \cite{gud120,hz12,nc00}. Moreover, we have the observables
\begin{align*}
(A\circ B)_\Delta =\sum _{(x,y)\in\Delta}(A_x\circ B_y)\\
\intertext{and}
(B\mid A)_Y=\sum _{x\in\Omega _A}(A_x\circ B_Y)
\end{align*}
If $\rho\in\sscript (H)$ and $A\in\oscript (H)$, the \textit{probability that} $A$ \textit{has an outcome in} $X\subseteq\Omega _A$ when the system is in state $\rho$ is $\pscript _\rho (A_X)=\rmtr (\rho A_X)$. Notice that $X\mapsto\pscript _\rho (A_X)$ is a probability measure on
$\Omega _A$. We call
\begin{equation*} 
\pscript _\rho (A_X\hbox{ then }B_Y)=\rmtr\sqbrac{\rho (A\circ B)_{X\times Y}}
\end{equation*}
the \textit{joint probability} of $A_X$ then $B_Y$ \cite{gud120,gud220,gud320}.

An \textit{operation} is a completely positive map $\ascript\colon\sscript _p(H)\to\sscript _p)(H)$ \cite{bgl95,hz12,nc00}. Any operation has a \textit{Kraus decomposition}
\begin{equation*} 
\ascript (\rho )=\sum _{i=1}^nS_i\rho S_i^*
\end{equation*}
where $S_i\in\lscript (H)$ with $\sum\limits _{i=1}^nS_i^*S_i\le 1$. An operation $\ascript$ is a \textit{channel} if $\ascript (\rho )\in\sscript (H)$ for all $\rho\in\sscript (H)$. In this case $\sum\limits _{i=1}^nS_i^*S_i=1$ and we denote the set of channels on $H$ by $\cscript (H)$. Notice
that if $a\in\escript (H)$, then $\rho\mapsto (\rho\mid a)=a\circ\rho$ is an operation and if $A\in\oscript (H)$, then
$\rho\mapsto (\rho\mid A)=\sum\limits _{x\in\Omega _A}(A_x\circ\rho )$ is a channel. For a finite set $\Omega _\iscript$, a (finite)
\textit{instrument} with outcome-space $\Omega _\iscript$ is a set of operations $\iscript =\brac{\iscript _x\colon x\in\Omega _\iscript}$ satisfying
$\cscript _\iscript =\sum\limits _{x\in\Omega _\iscript}\iscript _x\in\cscript (H)$ \cite{bgl95,hz12,nc00,oz01}. Defining
$\iscript _X=\sum\limits _{x\in X}\iscript _x$ for $X\subseteq\Omega _\iscript$, we see that $X\mapsto\iscript _X$ is an operation-valued measure on $H$ that we also call an \textit{instrument}. We denote the set of instruments on $H$ by $\rmin (H)$. We say that
$\iscript\in\rmin (H)$ \textit{measures} $A\in\oscript (H)$ if $\Omega _A=\Omega _\iscript$ and
\begin{equation}                % equation (2.1)
\label{eq21}
\pscript _\rho (A_X)=\rmtr\sqbrac{\iscript _X(\rho )}
\end{equation}
for every $\rho\in\sscript (H)$, $X\subseteq\Omega _A$. There is a unique $A\in\oscript (H)$ that $\iscript$ measures and we write
$A=\iscripthat$ \cite{bgl95,hz12,oz01}. For $\iscript ,\jscript\in\rmin (H)$, we define the \textit{product instrument} with outcome space
$\Omega _\iscript\times\Omega _\jscript$ by
\begin{equation*} 
(\iscript\circ\jscript )_{(x,y)}(\rho )=\jscript _y\sqbrac{\iscript _x(\rho )}
\end{equation*}
for every $\rho\in\sscript (H)$. We also define the \textit{conditioned instrument} with outcome-space $\Omega _\jscript$ by
\begin{equation*} 
(\jscript\mid\iscript )_y=\sum _{x\in\Omega _\iscript}(\iscript\circ\jscript )_{(x,y)}=\jscript _y\sqbrac{\cscript _\iscript (\rho )}
\end{equation*}
We conclude that
\begin{equation*} 
(\iscript\circ\jscript )_\Delta (\rho )=\sum _{(x,y)\in\Delta}\jscript _y\paren{\iscript _x(\rho )}
\end{equation*}
for all $\Delta\subseteq\Omega _\iscript\times\Omega _\jscript$ and
\begin{equation*} 
(\jscript\mid\iscript )_Y=\sum _{y\in Y}\jscript _y\paren{\cscript _\iscript (\rho )}
\end{equation*}
for all $Y\subseteq\Omega _\jscript$ \cite{gud120,gud220,gud320}.

A \textit{finite measurement model} (MM) is a 5-tuple $\mscript =(H,K,\eta ,\nu ,F)$ where $H$, $K$ are finite-dimensional Hilbert spaces called the \textit{base} and \textit{probe systems}, respectively, $\eta\in\sscript (K)$ is an \textit{initial probe state}, $\nu\in\cscript (H\otimes K)$ is a channel describing the measurement interaction between the base and probe systems and $F\in\oscript (K)$ is the \textit{probe} (or \textit{meter})
\textit{observable} \cite{bgl95,hz12,hrsz09}, We say that $\mscript$ \textit{measures the model instrument} $\mscripthat\in\rmin (H)$ where
$\mscripthat$ is the unique instrument satisfying
\begin{equation}                % equation (2.2)
\label{eq22}
\mscripthat _X(\rho )=\rmtr _K\sqbrac{\nu (\rho\otimes\eta )(I\otimes F_X)}
\end{equation}
for all $\rho\in\sscript (H)$, $X\subseteq\Omega _F$. In \eqref{eq22}, $\rmtr _K$ is the partial trace over $K$ \cite{hz12,nc00}. We also say that $\mscript$ \textit{measures the model observable} $\mscript ^{\wedge\wedge}$.

We thus have three levels of abstraction. At the basic level is an observable $A$ which is a measurement whose outcome probabilities
$\rmtr (\rho A_x)$ are determined by the state $\rho$ of the system. At the next level is an instrument $\iscript$. We think of $\iscript$ as an apparatus that can be employed to measure an observable $\iscripthat$. Although $\iscripthat$ is unique, there are many instruments that can be used to measure an observable. Moreover, $\iscript$ gives more information than $\iscripthat$ because, depending on the outcome $x$
(or event $X$), $\iscript$ updates the input state $\rho$ to the output partial state $\iscript _x(\rho )$ (or $\iscript _X(\rho )$). At the highest level is a measurement model $\mscript$ that measures a unique model instrument $\mscripthat$ and a unique model observable
$\mscript ^{\wedge\wedge}$. Again, there are many $MM$s that measure any instrument or observable and $\mscript$ contains more detailed information on how the measurement is performed.

\section{System Parts}  % Section 3
We begin by discussing parts of systems at the three levels considered in Section~2. We then show how parts can be used to define coexistence at these levels and even between levels. We also show that coexistence is equivalent to simultaneous measurability.

An element at one of the three levels discussed in Section~2 is called an \textit{entity}. The three levels are said to be the \textit{types} 1, 2 and 3, respectively. The concept of an entity being part of another entity was originally introduced in \cite{hrsz09,hmr14}. If $A,B\in\oscript (H)$, we say that $A$ is \textit{part of} $B$ (and write $A\to B$) if there exists a surjection $f\colon\Omega _B\to\Omega _A$ such that
$A_x=B_{f^{-1}(x)}$ for all $x\in\Omega _A$. We then write $A=f(B)$. It follows that $A_X=B_{f^{-1}(X)}$ for all $X\in\Omega _A$ and that
\begin{equation}                % equation (3.1)
\label{eq31}
A_X=\sum\brac{B_y\colon f(y)\in X}
\end{equation}
If $\iscript ,\jscript\in\rmin (H)$, we say that $\iscript$ is \textit{part of} $\jscript$ (and write $\iscript\to\jscript$) if there exists a surjection
$f\colon\Omega _\jscript\to\Omega _\iscript$ such that $\iscript _x=\jscript _{f^{-1}(x)}$ for all $x\in\Omega _\jscript$. We then write
$\iscript =f(\jscript )$ and an equation analogous to \eqref{eq31} holds. For $MM$s $\mscript _1=(H,K,\eta ,\nu ,F_1)$ and
$\mscript _2=(H,K,\eta ,\nu ,F_2)$ we say that $\mscript _1$ is \textit{part of} $\mscript _2$ (and write $\mscript _1\to\mscript _2$) if
$F_1\to F_2$. It follows that $F_1=f(F_2)$ and we write $\mscript _1=f(\mscript _2)$. We can also define ``part of'' for entities of different types. An observable $A\in\oscript (H)$ is \textit{part of} $\iscript\in\rmin (H)$ (written $A\to\iscript$) if $A\to\iscripthat$ and $A$ is \textit{part of}
$\mscript$ (written $A\to\mscript$) if $A\to\mscripthat$ which is equivalent to $A\to\mscript ^{\wedge\wedge}$. Finally, we say that $\iscript$ is \textit{part of} $\mscript$ (written $\iscript\to\mscript$) if $\iscript\to\mscripthat$. Two entities $\alpha$ and $\beta$ are \textit{equivalent}
(written $\alpha\cong\beta$) if $\alpha\to\beta$ and $\beta\to\alpha$. It is easy to check that $\cong$ is an equivalence relation and that
$\alpha\cong\beta$ if and only if $\alpha =f(\beta )$ for $f$ a bijection. Our first result summarizes properties possessed by ``part of''. Some of these properties have been verified in \cite{hmr14}, but we give the full proof for completeness.

\begin{thm}    % Theorem 3.1
\label{thm31}
{\rm{(a)}}\enspace If $\alpha ,\beta$ are of types 2 or 3 and $\alpha\to\beta$, then $\alphahat\to\betahat$.
{\rm{(b)}}\enspace $f(\iscripthat )=f(\iscript )^\wedge$ and $f(\mscripthat )=f(\mscript )^\wedge$.
{\rm{(c)}}\enspace If $\alpha ,\beta ,\gamma$ are of the same type and $\alpha =g(\beta )$, $\beta =f(\gamma )$, then
$\alpha =(g\circ f)(\gamma )$.
{\rm{(d)}}\enspace The relation $\to$ is a partial order to within equivalence.
{\rm{(e)}}\enspace If $\alpha$ and $\beta$ are of different types and $\alpha\to\beta$, then $\alpha =\betahat _1$ where $\beta _1\to\beta$.
\end{thm}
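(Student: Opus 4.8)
The plan is to prove the single computational statement (b) first, then read off (a), (c), (d), (e) from it together with elementary bookkeeping about surjections.

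\textbf{The key computation: (b) and the same-type cases of (a).}
For $\iscript\in\rmin(H)$ and a surjection $f$ on $\Omega_\iscript$, the coarse-graining $f(\iscript)$, with $x$-component $\iscript_{f^{-1}(x)}$, is again an instrument since $\sum_x\iscript_{f^{-1}(x)}=\cscript_\iscript\in\cscript(H)$. For every $\rho\in\sscript(H)$, \eqref{eq21} applied to singletons gives
\[
\rmtr\sqbrac{\iscript_{f^{-1}(x)}(\rho)}=\sum_{y\in f^{-1}(x)}\rmtr\sqbrac{\iscript_y(\rho)}=\sum_{y\in f^{-1}(x)}\rmtr\paren{\rho\,\iscripthat_y}=\rmtr\paren{\rho\,\iscripthat_{f^{-1}(x)}},
\]
so, again by \eqref{eq21}, $\pscript_\rho(f(\iscript)^\wedge_x)=\pscript_\rho(f(\iscripthat)_x)$ for all $\rho$; since an effect is determined by its expectations, $f(\iscript)^\wedge=f(\iscripthat)$. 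The same three lines with \eqref{eq22} in place of \eqref{eq21}, using linearity of $\rmtr_K$ and of $X\mapsto F_X$, yield $f(\mscript)^\wedge=f(\mscripthat)$, where $f(\mscript)=(H,K,\eta,\nu,f(F))$ and $f(F)\in\oscript(H)$ because $\sum_x F_{f^{-1}(x)}=1$. This is (b), and the same-type cases of (a) follow immediately: $\alpha\to\beta$ means $\alpha=f(\beta)$ for a surjection $f$, hence $\alphahat=f(\betahat)\to\betahat$ by (b).

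\textbf{The cross-type case of (a), and (c).}
The only remaining instance of (a) is $\iscript\to\mscript$, meaning $\iscript\to\mscripthat$ with $\iscript,\mscripthat$ of type $2$; the same-type case gives $\iscripthat\to(\mscripthat)^\wedge=\mscript^{\wedge\wedge}$, which is exactly $\iscripthat\to\mscripthat$ in the sense of ``part of'' between types $1$ and $2$. For (c), a composition of surjections is a surjection with $(g\circ f)^{-1}(x)=f^{-1}(g^{-1}(x))$; substituting $\beta_y=\gamma_{f^{-1}(y)}$ into $\alpha_x=\beta_{g^{-1}(x)}$ and summing over fibres gives $\alpha_x=\gamma_{(g\circ f)^{-1}(x)}$, and for type $3$ one applies this to the probe observables.

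\textbf{Part (d).}
Reflexivity holds via the identity map, and $\alpha\to\beta$ together with $\beta\to\alpha$ is the definition of $\alpha\cong\beta$, so once transitivity is known $\to$ descends to a partial order on $\cong$-classes (the descent is well defined by transitivity, and $\cong$ is an equivalence relation). For transitivity I would use that, when $\mathrm{type}(\alpha)=i\le j=\mathrm{type}(\beta)$, the definitions (as the text spells out for type $3$) unwind $\alpha\to\beta$ to the same-type relation $\alpha\to\beta^{\wedge(j-i)}$, where $\beta^{\wedge m}$ is the $m$-fold hat. Given $\alpha\to\beta\to\gamma$ of types $i\le j\le k$: from $\beta\to\gamma$ we have $\beta\to\gamma^{\wedge(k-j)}$, and $j-i$ applications of (a), legitimate since every hatted entity has type $\ge i+1\ge 2$, give $\beta^{\wedge(j-i)}\to\gamma^{\wedge(k-i)}$; with $\alpha\to\beta^{\wedge(j-i)}$ and the same-type transitivity contained in (c), this yields $\alpha\to\gamma^{\wedge(k-i)}$, i.e.\ $\alpha\to\gamma$.

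\textbf{Part (e) and the main obstacle.}
If $\alpha\to\beta$ with $\alpha,\beta$ of different types, the definitions produce a surjection $f$ with $\alpha=f(\jscripthat)$ when $\beta=\jscript$, with $\alpha=f\paren{(\mscripthat)^\wedge}$ when $\beta=\mscript$ and $\alpha$ is an observable, and with $\alpha=f(\mscripthat)$ when $\beta=\mscript$ and $\alpha$ is an instrument. I would take $\beta_1$ to be the coarse-graining of $\beta$ lying one type above $\alpha$ --- namely $f(\jscript)$, $f(\mscripthat)$, or $f(\mscript)$, respectively --- so that $\beta_1\to\beta$ is immediate and (b), applied to $\jscript$, to the instrument $\mscripthat$, and to $\mscript$ respectively, gives $\betahat_1=\alpha$. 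There is no analytic difficulty beyond the single fact that effects are determined by their expectations; the main obstacle is purely organizational --- tracking which of the several cross-type meanings of $\to$ is in force at each step, applying the correct form of (b) to the correct entity, and keeping $\mscript^{\wedge\wedge}$ and $(\mscripthat)^\wedge$ (which coincide) identified.
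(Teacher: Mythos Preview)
Your proposal is correct and follows essentially the same approach as the paper: the key trace computation establishing $f(\iscript)^\wedge=f(\iscripthat)$ and $f(\mscript)^\wedge=f(\mscripthat)$ is identical, and parts (a), (c), (e) are handled in the same way (the paper just orders (a) before (b) rather than the reverse). Your treatment of (d) via the iterated hat $\beta^{\wedge(j-i)}$ is a cleaner packaging of the paper's explicit case-by-case transitivity check, but the underlying argument is the same.
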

\begin{proof}
(a)\enspace Let $\iscript ,\jscript\in\rmin (H)$ with $\iscript\to\jscript$. Then there exists a surjection $f\colon\Omega _\jscript\to\Omega _\iscript$ such that $\iscript = f(\jscript )$. We now show that $\iscripthat =f(\jscripthat\,)$. Indeed, for any $\rho\in\sscript (H)$, $x\in\Omega _\iscript$ we have that
\begin{equation*}
\rmtr (\rho\iscripthat _x)=\rmtr\sqbrac{\iscript _x(\rho )}=\rmtr\sqbrac{\jscript _{f^{-1}(x)}(\rho )}=\rmtr\sqbrac{\rho\jscript _{f^{-1}(x)}}
   =\rmtr\sqbrac{\rho f(\jscript )_x}
\end{equation*}
Hence, $\iscript =f(\jscripthat\,)$ so $\iscripthat\to\jscripthat$. Let $\mscript _1=(H,K,\eta ,\nu ,F_1)$, $\mscript _2(H,K,\eta ,\nu ,F_2)$ be $MM$s where $F_1=f(F_2)$. Then for any $\rho\in\sscript (H)$, $x\in\Omega _{F_1}$ we have that
\begin{align*}
\mscripthat _{1,x}(\rho )&=\rmtr _K\sqbrac{\nu (\rho\otimes\eta )(I\otimes F_{1,x})}
  =\rmtr _K\sqbrac{\nu (\rho\otimes\eta )(I\otimes F_{2,f^{-1}(x)})}\\
   &=\mscripthat _{2,f^{-1}(x)}(\rho )=f(\mscripthat _2)
\end{align*}
Hence, $\mscripthat _1=f(\mscripthat _2)$ so $\mscripthat _1\to\mscripthat _2$. If $\iscript\to\mscript$, then $\iscript\to\mscripthat$. As before,
$\iscripthat\to\mscript ^{\wedge\wedge}$ so $\iscripthat\to\mscripthat$.\newline
(b)\enspace This was proved in (a).\enspace (c)\enspace We prove the result for observables $A,B,C$ and the result for instruments and
$MM$s is similar. We have that $A_x=B_{g^{-1}(x)}$ and $B_y=C_{f^{-1}(y)}$. Since $g\colon\Omega _B\to\Omega _A$ and
$f\colon\Omega _C\to\Omega _B$, we have that $g\circ f\colon\Omega _C\to\Omega _A$. Hence,
\begin{equation*}
A_x=B_{g^{-1}(x)}=C_{f^{-1}(g^{-1}(x))}=C_{(g\circ f)^{-1}(x)}
\end{equation*}
Hence, $A=(g\circ f)(C)$.
(d)\enspace We only need to prove that if $\alpha\to\beta$ and $\beta\to\gamma$, then $\alpha\to\gamma$. If $\alpha ,\beta ,\gamma$ are of the same type, the $\alpha\to\gamma$ follows from (c). Suppose $A,B\in\oscript (H)$, $\iscript\in\rmin (H)$ and $A\to B$, $B\to\iscript$. Then
$A\to B\to\iscripthat$ and these are the same type so $A\to\iscripthat$ and hence, $A\to\iscript$. Suppose $A\in\oscript (H)$
$\iscript ,\jscript\in\rmin (H)$ and $A\to\iscript$, $\iscript\to\jscript$. Then $A\to\iscripthat$ and $\iscript\to\jscript$. By (a) we have
$\iscripthat\to\jscripthat$. Since $A,\iscripthat ,\jscripthat$ have the same type, $A\to\jscripthat$ and hence, $A\to\jscript$. Suppose that
$A\to\iscript$ and $\iscript\to\mscript$. Then $A\to\iscripthat$ and $\iscript\to\mscripthat$. By (a) $\iscripthat\to\mscript ^{\wedge\wedge}$ so
$A\to\iscripthat$ and $\iscripthat\to\mscript ^{\wedge\wedge}$. Since these are the same type, we have that $A\to\mscript ^{\wedge\wedge}$ so $A\to\mscript$. Similar reasoning holds for the cases $\iscript\to\jscript\to\mscript$ and $\iscript\to\mscript _1\to\mscript _2$.\newline
(e)\enspace If $A\in\oscript (H)$, $\iscript\in\rmin (H)$ and $A\to\iscript$, then $A\to\iscripthat$ so $A=f(\iscripthat )$ for some surjection
$f\colon\Omega _{\jscripthat}\to\Omega _A$. By (b) we have that $f(\iscripthat )=f(\iscript )^\wedge$ so letting $\iscript _1=f(\iscript )$ we have that $A=f(\iscript )^\wedge =\iscripthat _1$. Hence, $\iscript _1\to\iscript$. If $A\to\mscript$, then $A\to\mscript ^{\wedge\wedge}$. By (b),
$A=f(\mscript ^{\wedge\wedge})=\sqbrac{f(\mscripthat )}^\wedge$. Letting $\iscript =f(\mscripthat )$ we have that $A=\iscripthat$,
$\iscript\to\mscripthat\to\mscript$. If $\iscript\to\mscript$, then $\iscript\to\mscripthat$. By (b) $\iscript =f(\mscript ^\wedge )=f(\mscript )^\wedge$. Letting $\iscript _1=f(\mscript )$, we have that $\iscript =\iscripthat _1$ and $\iscript _1\to\mscript$.
\end{proof}

For an entity $\alpha$, we denote its set of parts by $\alphatilde =\brac{\beta\colon\beta\to\alpha}$. We say that a set $\ascript$ of entities
\textit{coexist} if $\ascript\subseteq\alphatilde$ for some entity $\alpha$. A coexistent set $\ascript\subseteq\alphatilde$ is thought of as being simultaneously measured by $\alpha$. A related concept is that of joint measurability. We say that observables $A^i\in\oscript (H)$ with outcome sets $\Omega _i$, $i=1,2,\ldots ,n$ are \textit{jointly measurable} with \textit{joint observable} $B\in\oscript (H)$ if
$\Omega _B=\Omega _1\times\cdots\times\Omega _n$ and for all $x_i\in\Omega _i$ we have
\begin{equation}                % equation (3.2)
\label{eq32}
A_{x_i}^i=\sum\brac{B_{(x_1,\ldots ,x_i,\ldots ,x_n)}\colon x_j\in\Omega _j, j\ne i}
\end{equation}
We interpret $A^i$ as being the $i$th \textit{marginal} of $B$ as in classical probability theory \cite{gud120,gud220,hrsz09}. Similar definitions can be made for joint measurability of instruments and $MM$s.

\begin{thm}    % Theorem 3.2
\label{thm32}
A set of observables $A^i\in\oscript (H)$, $i=1,2,\ldots ,n$ is jointly measurable if and only if the $A^i$ coexist.
\end{thm}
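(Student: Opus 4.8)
The plan is to prove both directions directly from the definitions, translating ``coexist'' (i.e.\ all $A^i$ being parts of a single observable) into ``jointly measurable'' and back. Recall that $A^i\to B$ means there is a surjection $f_i\colon\Omega_B\to\Omega_i$ with $A^i_{x}=B_{f_i^{-1}(x)}=\sum\brac{B_y\colon f_i(y)=x}$, while joint measurability of the $A^i$ asks for a single $B$ with $\Omega_B=\Omega_1\times\cdots\times\Omega_n$ whose $i$th marginal, obtained by summing over all coordinates $j\ne i$, equals $A^i$.

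For the ``jointly measurable $\Rightarrow$ coexist'' direction, I would take the joint observable $B$ with $\Omega_B=\Omega_1\times\cdots\times\Omega_n$ and define $f_i\colon\Omega_B\to\Omega_i$ to be the $i$th coordinate projection $f_i(x_1,\ldots,x_n)=x_i$. Each $f_i$ is surjective (here one uses that the $\Omega_j$ are nonempty; this is implicit since they are outcome spaces of observables), and $f_i^{-1}(x_i)=\brac{(x_1,\ldots,x_n)\colon x_j\in\Omega_j,\ j\ne i}$, so \eqref{eq31} applied to $f_i$ reads exactly $A^i_{x_i}=B_{f_i^{-1}(x_i)}=\sum\brac{B_{(x_1,\ldots,x_n)}\colon x_j\in\Omega_j,\ j\ne i}$, which is precisely \eqref{eq32}. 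Hence $A^i\to B$ for every $i$, so $\brac{A^1,\ldots,A^n}\subseteq\widetilde B$ and the $A^i$ coexist.

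For the converse, suppose the $A^i$ coexist, so there is an observable $C$ with outcome space $\Omega_C$ and surjections $f_i\colon\Omega_C\to\Omega_i$ with $A^i=f_i(C)$. The natural move is to push $C$ forward along the combined map $f=(f_1,\ldots,f_n)\colon\Omega_C\to\Omega_1\times\cdots\times\Omega_n$ and set $B=f(C)$, i.e.\ $B_{(x_1,\ldots,x_n)}=\sum\brac{C_z\colon f_i(z)=x_i\ \text{for all }i}=C_{f^{-1}(x_1,\ldots,x_n)}$. One checks $B$ is an observable on $\Omega_1\times\cdots\times\Omega_n$ (the $B_{(x_1,\ldots,x_n)}$ are effects as sums of effects, and they sum to $\sum_{z\in\Omega_C}C_z=1$ since the fibers $f^{-1}(x_1,\ldots,x_n)$ partition $\Omega_C$). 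Then the $i$th marginal of $B$ is $\sum\brac{B_{(x_1,\ldots,x_n)}\colon x_j\in\Omega_j,\ j\ne i}=\sum\brac{C_z\colon f_i(z)=x_i}=A^i_{x_i}$, using that summing the fibers of $f$ over all coordinates except the $i$th recovers the fiber $f_i^{-1}(x_i)$. This gives \eqref{eq32}, so the $A^i$ are jointly measurable.

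The only real subtlety—and the step I would treat most carefully—is that the $B$ produced in the converse need not have $f_i$ equal to a coordinate projection unless we define it via the product map $f$; one must verify the fiber bookkeeping, namely that $f^{-1}(x_1,\ldots,x_n)$ ranges over a partition of $\Omega_C$ as $(x_1,\ldots,x_n)$ ranges over the product, and that $\bigcup\brac{f^{-1}(x_1,\ldots,x_n)\colon x_j\in\Omega_j,\ j\ne i}=f_i^{-1}(x_i)$. Both are elementary set-theoretic facts about fibers of a map into a product, but they are the crux of why ``coexistence via an arbitrary common $C$'' collapses to ``joint measurability via a product-shaped $B$.'' Surjectivity of $f_i$ (needed so the marginalization notation makes sense and no $\Omega_j$ is empty) should also be noted explicitly. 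Everything else is substitution into \eqref{eq31} and \eqref{eq32}.
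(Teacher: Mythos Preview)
Your proposal is correct and follows the same approach as the paper: coordinate projections for the forward direction, and the combined map $f=(f_1,\ldots,f_n)$ (the paper calls it $h$) pushed forward from $C$ to the product for the converse. If anything, you are slightly more careful than the paper, which asserts that $h$ is a surjection onto $\Omega_1\times\cdots\times\Omega_n$ (not true in general), whereas you sidestep this by directly verifying that the fibers of $f$ partition $\Omega_C$ so that $B$ is an observable regardless; the marginal computation then goes through identically.
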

\begin{proof}
If $\brac{A_i\colon i=1,2,\ldots ,n}$ are jointly measurable, there exists a joint observable $B\in\oscript (H)$ satisfying \eqref{eq32}. Defining
$f_i\colon\Omega _B\to\Omega _Ai$ by
\begin{equation*}
f_i(x_1,\ldots ,x_i,\ldots ,x_n)=x_i
\end{equation*}
for $i=1,2,\ldots ,n$, then by \eqref{eq32} we have that $A_{x_i}^i=B_{f_i^{-1}(x_i)}$ for all $x_i\in\Omega _i$. Hence, $A^i=f_i(B)$, $i=1,2,\ldots ,n$, so $\brac{A^i}$ coexist. Conversely, suppose that $\brac{A^i\colon i=1,2,\ldots ,n}$ coexist so there exists an observable
$C\in\oscript (H)$ such that $A^i\in\capctilde$, $i=1,2,\ldots ,n$. We then have surjections $f_i\colon\Omega _C\to\Omega _{A_i}$ such that
$A^i=f_i(C)$, $i=1,2,\ldots ,n$. Define $\Omega _B=\Omega _1\times\cdots\times\Omega _n$, a surjection
$h\colon\Omega _C\to\Omega _B$ by $h(y)=\paren{f_1(y),\ldots ,f_n(y)}$ and let $B=h(C)$. For $i=1,2,\ldots ,n$, we obtain
\begin{align*}
A_{x_i}^i&=C_{f_i^{-1}(x_i)}=\sum\brac{C_y\colon f_i(y)=x_i}\\
   &=\sum\brac{C_y\colon\paren{f_1(y),\ldots ,f_n(y)}=\brac{x_1,\ldots ,x_i,\ldots ,x_n},x_j\in\Omega _j,j\ne i}\\
   &=\sum\brac{C_y\colon h(y)=(x_1,\ldots ,x_i,\ldots ,x_n)\colon x_j\in\Omega _j, j\ne i}\\
   &=\sum\brac{C_{h^{-1}(x_1,\ldots ,x_i,\ldots ,x_n)}\colon x_j\in\Omega _j,j\ne i}\\
   &=\sum\brac{h(C)_{(x_1,\ldots ,x_i,\ldots ,x_n)}\colon x_j\in\Omega _j,j\ne i}\\
   &=\sum\brac{B_{(x_1,\ldots ,x_i,\ldots ,x_n)}\colon x_j\in\Omega _j, j\ne i}
\end{align*}
Thus, \eqref{eq32} holds so $\brac{A^i}$ are jointly measurable.
\end{proof}

Theorem~\ref{thm32} also holds for instruments and $MM$s. An important property of coexistent entities is that they have joint probability distributions $\Phi _\rho$ for all $\rho\in\sscript (H)$. For example, if $A,B\in\oscript (H)$ coexist, then $A=f(C)$, $B=g(C)$ for some
$C\in\oscript (H)$. Then for any $X\subseteq\Omega _A$, $Y\subseteq\Omega _B$, the joint probability becomes
\begin{equation*}
\Phi _\rho (A_X,B_Y)=\rmtr\!\sqbrac{\rho\sum\brac{C_z\colon z\in f^{-1}(X)\cap g^{-1}(Y)}}=\rmtr\!\sqbrac{\rho C_{f^{-1}(X)\cap g^{-1}(Y)}}
\end{equation*}
As another example, if $A,B\in\iscript$, then $A,B\to\iscripthat$ so $A=f(\iscripthat\,)$, $B=g(\iscripthat\,)$ for surjections $f,g$. We then obtain
\begin{equation*}
\Phi _\rho (A_X,B_Y)=\rmtr\sqbrac{\rho\iscripthat _{f^{-1}(X)\cap g^{-1}(Y)}}=\rmtr\sqbrac{\iscript _{f^{-1}(X)\cap g^{-1}(Y)}(\rho )}
\end{equation*}
We can continue this for many coexistent entities. Moreover, the entities do not need to be of the same type. For instance, suppose
$A,\iscript\to\jscript$ where $A=f(\jscripthat\,)$ and $\iscript =g(\jscript )$. Then we have that
\begin{align*}
\Phi _\rho (A_X,\iscript _Y)&=\Phi _\rho\sqbrac{f(\jscripthat\,)_X,g(\jscript )_Y}=\Phi _\rho\sqbrac{\jscripthat _{f^{-1}(X)},\jscript _{g^{-1}(Y)}}\\
   &=\rmtr\sqbrac{\jscript _{f^{-1}(X)\cap g^{-1}(Y)}(\rho )}
\end{align*}

For $A\in\oscript (H)$ we define the \textit{probability distribution} $\Phi _\rho ^A(X)=\rmtr (\rho A_X)$ for all $X\subseteq\Omega _A$,
$\rho\in\sscript (H)$. In a similar way, if $\iscript\in\rmin (H)$ we define $\Phi _\rho ^\iscript (X)=\rmtr\sqbrac{\iscript _X(\rho )}$ and if $\mscript$ is a $MM$, then $\Phi _\rho ^\mscript (X)=\Phi _\rho ^{\mscripthat}(X)$.

\begin{lem}    % Lemma 3.3
\label{lem33}
If $\alpha$ is an entity and $f\colon\Omega _\alpha\to\Omega$ is a surjection, then $\Phi ^{f(\alpha )}=\Phi ^\alpha\circ f^{-1}$.
\end{lem}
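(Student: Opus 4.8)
The plan is to establish the identity pointwise and then split into the three types of entity. So fix $\rho\in\sscript(H)$ and an event $X\subseteq\Omega$; what has to be shown in each case is $\Phi^{f(\alpha)}_\rho(X)=\Phi^\alpha_\rho\paren{f^{-1}(X)}$, since then $\Phi^{f(\alpha)}=\Phi^\alpha\circ f^{-1}$ follows by arbitrariness of $\rho$ and $X$. The one fact used throughout is that a surjective relabelling acts on an effect-valued, respectively operation-valued, measure by pulling events back along $f$: from $f(\alpha)_x=\alpha_{f^{-1}(x)}$ on singletons one gets, by summing over $x\in X$ exactly as in \eqref{eq31}, the identity $f(\alpha)_X=\alpha_{f^{-1}(X)}$, where $f^{-1}(X)=\brac{y\in\Omega_\alpha\colon f(y)\in X}$.

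First I would handle $\alpha=A\in\oscript(H)$: here $f(A)_X=A_{f^{-1}(X)}$ by the remark above, so $\Phi^{f(A)}_\rho(X)=\rmtr\sqbrac{\rho f(A)_X}=\rmtr\sqbrac{\rho A_{f^{-1}(X)}}=\Phi^A_\rho\paren{f^{-1}(X)}$. The instrument case $\alpha=\iscript\in\rmin(H)$ is the same computation one level up: $f(\iscript)_X=\iscript_{f^{-1}(X)}$ as operations, hence $\Phi^{f(\iscript)}_\rho(X)=\rmtr\sqbrac{f(\iscript)_X(\rho)}=\rmtr\sqbrac{\iscript_{f^{-1}(X)}(\rho)}=\Phi^\iscript_\rho\paren{f^{-1}(X)}$.

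The measurement-model case $\alpha=\mscript$ I would reduce to the instrument case rather than redo it: by definition $\Phi^\mscript_\rho=\Phi^{\mscripthat}_\rho$, and Theorem~\ref{thm31}(b) gives $f(\mscript)^\wedge=f(\mscripthat)$, so $\Phi^{f(\mscript)}_\rho(X)=\Phi^{f(\mscript)^\wedge}_\rho(X)=\Phi^{f(\mscripthat)}_\rho(X)$, which by the instrument case applied to $\mscripthat$ (whose outcome space is the probe outcome space shared by $\mscript$) equals $\Phi^{\mscripthat}_\rho\paren{f^{-1}(X)}=\Phi^\mscript_\rho\paren{f^{-1}(X)}$.

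I do not anticipate a genuine obstacle: the lemma merely records that relabelling then evaluating agrees with evaluating then pushing the distribution forward, which is immediate from \eqref{eq31} once one notes that the singleton formula for $f(\alpha)$ extends additively to events. The only care needed is bookkeeping — reading $f^{-1}(X)$ as the preimage set, and observing that the measurement-model layer introduces nothing analytically new once it is passed through $\mscripthat$ via Theorem~\ref{thm31}(b).
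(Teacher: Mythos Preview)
Your proof is correct and essentially the same as the paper's, which treats only the observable case (at singletons $x$ rather than events $X$) and declares the other entity types similar. Your reduction of the measurement-model case to the instrument case via Theorem~\ref{thm31}(b) is a neat touch, but not a genuinely different route.
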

\begin{proof}
We give the proof for $A\in\oscript (H)$ and the proof for other entities is similar. For $x\in\Omega _A$, $\rho\in\sscript (H)$ we obtain
\begin{align*}
\Phi _\rho ^{f(A)}&=\rmtr\sqbrac{\rho f(A)_x}=\rmtr\sqbrac{\rho A_{f^{-1}(x)}}
   =\rmtr\sqbrac{\rho\sum\brac{A_y\colon f(y)=x}}\\
   &=\sum\brac{\rmtr (\rho A_y)\colon f(y)=x}=\sum\brac{\Phi _\rho ^A(y)\colon f(y)=x}\\
   &=\Phi _\rho ^A\sqbrac{f^{-1}(x)}=\Phi _\rho ^A\circ f^{-1}(x)
\end{align*}
The result now follows.
\end{proof}

We now consider sequential products of observables.

\begin{thm}    % Theorem 3.4
\label{thm34}
If $A,B\in\oscript (H)$ and $h\colon\Omega _B\to\Omega$ is a surjection, then $A$, $(B\mid A)$ and $A\circ h(B)$ are parts of $A\circ B$.
\end{thm}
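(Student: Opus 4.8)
The plan is to exhibit, for each of the three claimed parts, an explicit surjection from the outcome-space $\Omega_A\times\Omega_B$ of $A\circ B$ onto the relevant outcome-space, and then verify the defining identity $\gamma_z=(A\circ B)_{g^{-1}(z)}$ (i.e. that summing the effects $A_x\circ B_y$ over the appropriate fibre reproduces the target effect). Since $A\circ B$ has effects $A_x\circ B_y=A_x^{1/2}B_yA_x^{1/2}$ indexed by $(x,y)$, all three verifications come down to pushing a sum past the $\circ$-operation, which is linear in its second slot; this is the routine computational core and I would not grind through it in detail.

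First I would handle $A$: take the first-coordinate projection $g_1\colon\Omega_A\times\Omega_B\to\Omega_A$, $g_1(x,y)=x$. Then $g_1^{-1}(x)=\{x\}\times\Omega_B$, and $(A\circ B)_{g_1^{-1}(x)}=\sum_{y\in\Omega_B}(A_x\circ B_y)=\sum_{y}A_x^{1/2}B_yA_x^{1/2}=A_x^{1/2}\big(\sum_y B_y\big)A_x^{1/2}=A_x^{1/2}\,1\,A_x^{1/2}=A_x$, using $\sum_y B_y=1$. Hence $A=g_1(A\circ B)$, so $A\to A\circ B$. Next, for $(B\mid A)$ I would use the second-coordinate projection $g_2(x,y)=y$, so that $g_2^{-1}(y)=\Omega_A\times\{y\}$ and $(A\circ B)_{g_2^{-1}(y)}=\sum_{x\in\Omega_A}(A_x\circ B_y)=(B\mid A)_y$ by the very definition of the conditioned observable given in Section~2; thus $(B\mid A)=g_2(A\circ B)$ and $(B\mid A)\to A\circ B$. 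Finally, for $A\circ h(B)$, which has outcome-space $\Omega_A\times\Omega$, I would use the map $g_3\colon\Omega_A\times\Omega_B\to\Omega_A\times\Omega$ defined by $g_3(x,y)=(x,h(y))$; this is a surjection because $h$ is. Its fibre over $(x,z)$ is $\{x\}\times h^{-1}(z)$, so $(A\circ B)_{g_3^{-1}(x,z)}=\sum_{y\in h^{-1}(z)}A_x\circ B_y=A_x\circ\big(\sum_{y\in h^{-1}(z)}B_y\big)=A_x\circ h(B)_z$, using linearity of $b\mapsto a\circ b$ together with $h(B)_z=B_{h^{-1}(z)}=\sum\{B_y\colon h(y)=z\}$ from \eqref{eq31}. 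Hence $A\circ h(B)=g_3(A\circ B)$, giving $A\circ h(B)\to A\circ B$.

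I do not anticipate a genuine obstacle here: the only point requiring a moment's care is checking surjectivity of $g_3$ (which uses surjectivity of $h$ and the fact that $g_3$ acts as the identity on the first coordinate), and keeping straight that $h(B)$ is well-defined — i.e. that $\{B_y\colon h(y)=z\}$ really is the $z$-effect of an observable with outcome-space $\Omega$, which is immediate since $\sum_{z\in\Omega}\sum_{h(y)=z}B_y=\sum_{y}B_y=1$. Everything else is bookkeeping with the definition $a\circ b=a^{1/2}ba^{1/2}$ and the linearity it enjoys in $b$. One could alternatively note that $A$ and $(B\mid A)$ are the two marginals of $A\circ B$ and invoke the marginal/part correspondence, but writing out the two projections directly is cleaner and self-contained.
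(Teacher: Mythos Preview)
Your proposal is correct and follows essentially the same route as the paper: the paper uses exactly the three surjections $f(x,y)=x$, $g(x,y)=y$, and $u(x,y)=(x,h(y))$ and carries out the same fibre-sum verifications you describe. Your added remarks on the surjectivity of $g_3$ and the well-definedness of $h(B)$ are sound and slightly more explicit than the paper's treatment.
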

\begin{proof}
Defining $f\colon\Omega _A\times\Omega _B\to\Omega _A$ by $f(x,y)=x$ we have that
\begin{align*}
f(A\circ B)_x&=(A\circ B)_{f^{-1}(x)}=\!\sum\brac{(A\circ B)_{(y,z)}\colon f(y,z)=x}=\!\!\sum _{z\in\Omega _B}\!(A\circ B)_{(x,z)}\\
   &=\sum _{z\in\Omega _B}A_x\circ B_z=A_x\circ 1=A_x
\end{align*}
Thus, $A=f(A\circ B)$ so $A\to A\circ B$. Defining $g\colon\Omega _A\times\Omega _B\to\Omega _B$ by $g(x,y)=y$ we obtain
\begin{align*}
g(A\circ B)_y&=(A\circ B)_{g^{-1}(y)}=\!\!\sum\brac{(A\circ B)_{(x,z)}\colon g(x,z)=y}=\!\!\sum _{x\in\Omega _A}(A\circ B)_{(x,y)}\\
   &=\sum _{x\in\Omega _A}A_x\circ B_y=(B\mid A)_y
\end{align*}
Hence, $(B\mid A)=g(A\circ B)$ so $(B\mid A)\to A\circ B$. Defining $u\colon\Omega _A\times\Omega _B\to\Omega _A\times\Omega$ by
$u(x,y)=(x,h(y))$ we have that
\begin{align*}
\sqbrac{u(A\circ B)}_{(x,y)}&=(A\circ B)_{u^{-1}(x,y)}=(A\circ B)_{(x,h^{-1}(y))}=A_x\circ B_{h^{-1}(y)}\\
   &=A_x\circ h(B)_y=\sqbrac{A\circ h(B)}_{(x,y)}
\end{align*}
It follows that $A\circ h(B)=u(A\circ B)$. Hence, $A\circ h(B)\to A\circ B$.
\end{proof}
 Some results analogous to Theorem~\ref{thm34} hold for other entities.

\begin{exam}{1}  % Example 1
We consider the simplest nontrivial example of a sequential product $A\circ B$ of observables. Let $A=\brac{a_0,a_1}$, $B=\brac{b_0,b_1}$ be binary (diatomic) observables. Then $\Omega _{A\circ B}=\brac{0,1}\times\brac{0,1}$ and
\begin{equation*}
A\circ B=\brac{a_0\circ b_0,a_1\circ b_0,a_0\circ b_1,a_1\circ b_1}
\end{equation*}
Except in trivial cases, $A\circ B$ has precisely the following nine parts to within equivalence:
\begin{align*}
&A\circ B,\brac{a_0\circ b_0,a_1+a_0\circ b_1},\brac{a_1\circ b_0,a_0+a_1\circ b_1},\brac{a_0\circ b_1,a_1+a_0\circ b_0}\\
&\brac{a_1\circ b_1,a_0+a_1\circ b_0},\brac{a_0\circ b_0+a_1\circ b_0,a_0\circ b_1+a_1\circ b_1},\brac{a_0,a_1}\\
&\brac{a_0\circ b_0+a_1\circ b_1,a_1\circ b_0+a_0\circ b_1},\brac{1}
\end{align*}
Notice that the sixth of the parts is $(B\mid A)$ and the seventh is $A$ as required by Theorem~\ref{thm34}. Each of the parts is a function of
$A\circ B$. The parts listed correspond to the following functions $f_i\colon\brac{0,1}\times\brac{0,1}\to\brac{1,2,3,4}$, $i=1,2,\ldots ,9$.
\vskip 2pc

% Table 1
\hskip 2pc
\begin{tabular}{|c|c|c|c|c|c|c|}
\hline
function&$(0,0)$&\;$(0,1)$\;&$(1,0)$&$(1,1)$\\
\hline
$f_1$&1&2&3&4\\
\hline
$f_2$&1&2&2&2\\
\hline
$f_3$&2&2&1&2\\
\hline
$f_4$&2&1&2&2\\
\hline
$f_5$&2&2&2&1\\
\hline
$f_6$&1&2&1&2\\
\hline
$f_7$&1&1&2&2\\
\hline
$f_8$&1&2&2&1\\
\hline
$f_9$&1&1&1&1\\
\hline
\hline\noalign{\medskip}
\multicolumn{5}{c}{\textbf{Table 1: Function Values}}\\
\end{tabular}
\vskip -1pc \hfill\qedsymbol
\end{exam}
\vskip 1pc

\begin{exam}{2}  % Example 2
Similar to Example~1, for the two binary instruments $\iscript =\brac{\iscript _0,\iscript _1}$, $\jscript =\brac{\jscript _0,\jscript _1}$ we have the instrument $\iscript\circ\jscript$ with $\Omega _{\iscript\circ\jscript}=\brac{0,1}\times\brac{0,1}$ and
\begin{equation*}
\iscript\circ\jscript =\brac{\iscript _0\circ\jscript _0,\iscript _1\circ\jscript _0,\iscript _0\circ\jscript _1,\iscript _1\circ\jscript _1}
\end{equation*}
The nine parts of $\iscript\circ\jscript$ to within equivalence are:
\begin{align*}
&\iscript\circ\jscript ,\brac{\iscript _0\circ\jscript _0,\iscript _0\circ\jscript _1+\iscript _1\circ\cscript _\jscript},
   \brac{\iscript _1\circ\jscript _0,\iscript _1\circ\jscript _1+\iscript _0\circ\cscript _\jscript}\\
   &\brac{\iscript _0\circ\jscript _1,\iscript _0\circ\jscript _0+\iscript _1\circ\cscript _\jscript},
   \brac{\iscript _1\circ\jscript _1,\iscript _1\circ\jscript _0+\iscript _0\circ\cscript _\jscript},
   \brac{\cscript _\iscript\circ\jscript _0,\cscript _\iscript\circ\jscript _1}\\
   &\brac{\iscript _0\circ\cscript _\jscript ,\iscript _1\circ\cscript _\jscript},
   \brac{\iscript _0\circ\jscript _0+\iscript _1\circ\jscript _1,\iscript _1\circ\jscript _0+\iscript _0\circ\jscript _1},\brac{\cscript _{\iscript\circ\jscript}}
\end{align*}
As in Example~1, the sixth part is $(\jscript\mid\iscript )$, however, unlike the observable case, the seventh part is not $\iscript$. In fact, unlike that case, $\iscript$ is not a part of $(\iscript\circ\jscript )$.\hfill\qedsymbol
\end{exam}

If $A\in\oscript (H)$ the corresponding \textit{L\"uders instrument} $\lscript ^A\in\rmin (H)$ is defined by $\Omega _{\lscript ^A}=\Omega _A$ and $\lscript _x^A(\rho )=A_x^{1/2}\rho A_x^{1/2}$ for all $\rho\in\sscript (H)$. It follows that \cite{lud51}
\begin{equation*}
\lscript _X^A(\rho )=\sum _{x\in X}A_x^{1/2}\rho A_x^{1/2}
\end{equation*}
for all $\rho\in\sscript (H)$, $X\subseteq\Omega _A$. It is easy to check that $(\lscript ^A)^\wedge =A$. Hence, for $B\in\oscript (H)$ we have that $B\to\lscript ^A$ if and only if $B\to A$.

\begin{thm}    % Theorem 3.5
\label{thm35}
{\rm{(a)}}\enspace $\lscript ^{A\circ B}=\lscript ^A\circ\lscript ^B$ if and only if $A_xB_y=B_yA_x$ for all $x\in\Omega _A$, $y\in\Omega _B$.
{\rm{(b)}}\enspace $(\lscript ^{A\circ B})^\wedge =(\lscript ^A\circ\lscript ^B)^\wedge =A\circ B$.
{\rm{(c)}}\enspace An observable $C$ satisfies $C\to\lscript ^A\circ\lscript ^B$ if and only if $C\to A\circ B$.
\end{thm}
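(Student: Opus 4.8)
The plan is to dispatch parts (b) and (c) quickly and to concentrate all the work on (a). For (b): the equality $(\lscript^{A\circ B})^\wedge=A\circ B$ is nothing but the already-noted fact $(\lscript^C)^\wedge=C$, applied to the observable $C=A\circ B$; and $(\lscript^A\circ\lscript^B)^\wedge=A\circ B$ follows by taking the trace of $(\lscript^A\circ\lscript^B)_{(x,y)}(\rho)=\lscript_y^B\paren{\lscript_x^A(\rho)}=B_y^{1/2}A_x^{1/2}\rho A_x^{1/2}B_y^{1/2}$, using cyclicity of the trace to rewrite it as $\rmtr\sqbrac{\rho(A_x\circ B_y)}$, and invoking the uniqueness of the observable that an instrument measures. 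Part (c) is then immediate: by the definition of an observable being part of an instrument, $C\to\lscript^A\circ\lscript^B$ means $C\to(\lscript^A\circ\lscript^B)^\wedge$, which by (b) is exactly $C\to A\circ B$, and conversely.

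For the ``if'' direction of (a), I would assume $A_xB_y=B_yA_x$ for all $x\in\Omega_A$, $y\in\Omega_B$. Commutativity of $A_x$ and $B_y$ passes through the functional calculus to $A_x^{1/2}$ and $B_y^{1/2}$, so $A_x\circ B_y=A_x^{1/2}B_yA_x^{1/2}=A_xB_y$ and its unique positive square root is $A_x^{1/2}B_y^{1/2}=B_y^{1/2}A_x^{1/2}$. Substituting this into the definition of $\lscript_{(x,y)}^{A\circ B}$ turns that operation into $\rho\mapsto B_y^{1/2}A_x^{1/2}\rho A_x^{1/2}B_y^{1/2}=\lscript_y^B\paren{\lscript_x^A(\rho)}$, i.e.\ into $(\lscript^A\circ\lscript^B)_{(x,y)}$; as $(x,y)$ is arbitrary, $\lscript^{A\circ B}=\lscript^A\circ\lscript^B$.

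The substance is the ``only if'' direction of (a), and I expect the one genuine obstacle to live there. Fix $x,y$ and set $M=B_y^{1/2}A_x^{1/2}$, so that $M^*M=A_x^{1/2}B_yA_x^{1/2}=A_x\circ B_y$. Evaluating the two assumed-equal operations $\lscript_{(x,y)}^{A\circ B}$ and $(\lscript^A\circ\lscript^B)_{(x,y)}$ at the single maximally mixed state $I/\dim H$ yields $(A_x\circ B_y)^{1/2}(A_x\circ B_y)^{1/2}=MM^*$, that is $M^*M=MM^*$, so $M=B_y^{1/2}A_x^{1/2}$ is normal. The remaining step --- the main obstacle --- is the lemma that a product $PQ$ of two positive operators that happens to be normal must satisfy $PQ=QP$; with $P=B_y^{1/2}$, $Q=A_x^{1/2}$ this gives $B_y^{1/2}A_x^{1/2}=A_x^{1/2}B_y^{1/2}$, whence $A_xB_y=B_yA_x$. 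I would prove that lemma by first noting that every eigenvalue of $PQ$ is a nonnegative real: if $PQv=\lambda v$ with $v\ne0$, then $\lambda\elbows{v,Qv}=\elbows{Qv,PQv}=\doubleab{P^{1/2}Qv}^2\ge0$, and since $\elbows{v,Qv}\ge0$ --- with $\elbows{v,Qv}=0$ forcing $Qv=0$ and hence $\lambda=0$ --- we get $\lambda\ge0$; a normal operator with nonnegative spectrum is positive, hence self-adjoint, so $PQ=(PQ)^*=QP$. (Alternatively this lemma is the known equivalence ``$a\circ b=b\circ a$ iff $ab=ba$'' for effects and could simply be cited.)
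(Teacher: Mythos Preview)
Your proposal is correct and follows essentially the same path as the paper: for (b) and (c) your arguments match the paper's almost verbatim, and for (a) both directions proceed identically up to the evaluation at $\rho=\tfrac{1}{n}1$, which in the paper is written as the equality $A_x\circ B_y=B_y\circ A_x$ (your ``$M$ is normal'' is literally the same statement, since $M^*M=A_x\circ B_y$ and $MM^*=B_y\circ A_x$). The only difference is that the paper then cites \cite{gn01} for the implication $a\circ b=b\circ a\Rightarrow ab=ba$, whereas you supply a clean self-contained spectral proof of it---which you yourself note is optional.
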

\begin{proof}
(a)\enspace For all $\rho\in\sscript (H)$, $(x,y)\in\Omega _A\times\Omega _B$ we have that
\begin{equation}                % equation (3.3)
\label{eq33}
(\lscript ^A\circ\lscript ^B)_{(x,y)}(\rho )=\lscript _y^B\paren{\lscript _x^A(\rho )}
  =\lscript _y^B(A_x^{1/2}\rho A_x^{1/2})=B_y^{1/2}A_x^{1/2}\rho A_x^{1/2}B_y^{1/2}
\end{equation}
On the other hand,
\begin{align}                % equation (3.4)
\label{eq34}
(\lscript ^{A\circ B})_{(x,y)}(\rho )&=(A\circ B)_{(x,y)}^{1/2}\rho (A\circ B)_{(x,y)}^{1/2}=(A_x\circ B_y)^{1/2}\rho (A_x\circ B_y)^{1/2}\notag\\
  &=(A_x^{1/2}B_yA_x^{1/2})^{1/2}\rho (A_x^{1/2}B_yA_x^{1/2})^{1/2}
\end{align}
If $A_xB_y=B_yA_x$, we obtain
\begin{align*}
(\lscript ^{A_\circ B})_{(x,y)}(\rho )&=(A_xB_y)^{1/2}\rho (A_xB_y)^{1/2}=B_y^{1/2}A_x^{1/2}\rho A_x^{1/2}B_y^{1/2}\\
  &=(\lscript ^A\circ\lscript ^B)_{(x,y)}(\rho )
\end{align*}
so that $\lscript ^{A\circ B}=\lscript ^A\circ\lscript ^B$. Conversely, if $\lscript ^{A\circ B}=\lscript ^A\circ\lscript ^B$, letting $\rho =\frac{1}{n}\,1$ where $n=\dim H$, we obtain from \eqref{eq33} and \eqref{eq34} that
\begin{equation*}
B_y\circ A_x=B_y^{1/2}A_xB_y^{1/2}=A_x^{1/2}B_yA_x^{1/2}=A_x\circ B_y
\end{equation*}
It follows that $B_yA_x=A_xB_y$ for all $x\in\Omega _A$, $y\in\Omega _B$ \cite{gn01}.
(b)\enspace We have already pointed out that $(\lscript ^{A\circ B})^\wedge =A\circ B$. To show that
$(\lscript ^A\circ\lscript ^B)^\wedge =A\circ B$, applying \eqref{eq33} gives
\begin{align*}
\rmtr\sqbrac{\rho (\lscript ^A\circ\lscript ^B)_{(x,y)}^\wedge}&=\rmtr\sqbrac{(\lscript ^A\circ\lscript ^B)_{(x,y)}(\rho)}
  =\rmtr (\rho A_x^{1/2}B_yA_x^{1/2})\\
  &\rmtr (\rho A_x\circ B_y)=\rmtr\sqbrac{\rho (A\circ B)_{(x,y)}}
\end{align*}
Hence, $(\lscript ^A\circ\lscript ^B)^\wedge =A\circ B$.
(c)\enspace This follows from (b) and Theorem~\ref{thm31}(a).
\end{proof}

\begin{exam}{3}  % Example 3
We have seen from Theorem~\ref{thm35}(b) that $(\lscript ^A\circ\lscript ^B)^\wedge =(\lscript ^A)^\wedge\circ (\lscript ^B)^\wedge$. We now show that $(\iscript\circ\jscript )^\wedge\ne\iscripthat\circ\jscripthat$ in general. Let $\delta ,\gamma\in\sscript (H)$ and $A,B\in\oscript (H)$. The instruments $\iscript _x(\rho )=\rmtr (\rho A_x)\delta$ and $\jscript _y(\rho )=\rmtr (\rho B_y)\gamma$ are called \textit{trivial instruments} with observables $A,B$ and states $\delta ,\gamma$, respectively \cite{hz12}. We have that
\begin{equation*}
\rmtr (\rho\iscripthat _x)=\rmtr\sqbrac{\iscript _x(\rho )}=\rmtr\sqbrac{\rmtr (\rho A_x)\delta}=\rmtr (\rho A_x)
\end{equation*}
Hence, $\iscripthat =A$ and similarly $\jscripthat =B$. For all $\rho\in\sscript (H)$ we obtain
\begin{align}                % equation (3.5)
\label{eq35}
\rmtr\!\sqbrac{\rho (\iscript\circ\jscript )_{(x,y)}^\wedge}&=\rmtr\!\sqbrac{(\iscript\circ\jscript )_{(x,y)}(\rho )}
  =\rmtr\!\sqbrac{\jscript _y\paren{\iscript _x(\rho )}}=\rmtr\!\sqbrac{\jscript _y\paren{\rmtr (\rho A_x)\delta}}\notag\\
  &=\rmtr(\rho A_x)\rmtr\sqbrac{\jscript _y(\delta )}=\rmtr (\rho A_x)\rmtr\sqbrac{\rmtr (\delta B_y)\gamma}\notag\\
  &=\rmtr (\rho A_x)\rmtr (\delta B_y)
\end{align}

On the other hand,
\begin{equation}                % equation (3.6)
\label{eq36}
\rmtr\sqbrac{\rho\iscripthat _x\circ\jscripthat _y}=\rmtr (\rho A_x\circ B_y)
\end{equation}
Since the right hand sides of \eqref{eq35} and \eqref{eq36} are different in general, we conclude that
$(\iscript\circ\jscript )^\wedge\ne\iscripthat\circ\jscripthat$.\hfill\qedsymbol
\end{exam}

We saw in Theorem~\ref{thm35}(a) that $\lscript ^{A\circ B}\ne\lscript ^A\circ\lscript ^B$, in general. The following lemma shows they can differ in a striking way.

\begin{lem}    % Lemma 3.6
\label{lem36}
If $A_x=\ket{\phi _x}\bra{\phi _x}$ and $B_y=\ket{\psi _y}\bra{\psi _y}$ are atomic observables on $H$, then for all $\rho\in\sscript (H)$, there exist numbers $\lambda _{xy}(\rho )\in\sqbrac{0,1}$ with $\sum\limits _{x,y}\lambda _{xy}(\rho )=1$ such that
$\lscript _{(x,y)}^{A\circ B}(\rho )=\lambda _{xy}(\rho )A_x$ and $(\lscript ^A\circ\lscript ^B)_{(x,y)}(\rho )=\lambda _{xy}(\rho )B_y$ for all
$(x,y)\in\Omega _A\times\Omega _B$.
\end{lem}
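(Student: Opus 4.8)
The plan is to exploit that an atomic effect is a projection, so that $A_x^{1/2}=A_x$ and $B_y^{1/2}=B_y$; this removes every square root appearing in the sequential product and in the L\"uders channels. Set $c_{xy}=\ab{\elbows{\phi_x,\psi_y}}^2=\elbows{\phi_x,B_y\phi_x}$. Because $0\le A_x\le 1$ and $0\le B_y\le 1$, we have $c_{xy}\in\sqbrac{0,1}$, and pairing the resolutions $\sum_y B_y=1$ and $\sum_x A_x=1$ with $\phi_x$ and $\psi_y$ respectively gives $\sum_y c_{xy}=\doubleab{\phi_x}^2=1$ for each $x$ and $\sum_x c_{xy}=1$ for each $y$. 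I would also record the elementary identities $A_x\rho A_x=\elbows{\phi_x,\rho\phi_x}A_x$, $A_xB_yA_x=c_{xy}A_x$, and $B_yA_xB_y=c_{xy}B_y$, each obtained by collapsing rank-one operators to scalars.

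Next I would compute the two channels on an arbitrary $\rho\in\sscript(H)$. Using $(A\circ B)_{(x,y)}=A_x^{1/2}B_yA_x^{1/2}=A_xB_yA_x=c_{xy}A_x$, and hence (since $A_x$ is a projection and $c_{xy}\ge 0$) $(A\circ B)_{(x,y)}^{1/2}=c_{xy}^{1/2}A_x$, one gets $\lscript_{(x,y)}^{A\circ B}(\rho)=c_{xy}A_x\rho A_x=c_{xy}\elbows{\phi_x,\rho\phi_x}A_x$. On the other hand $(\lscript^A\circ\lscript^B)_{(x,y)}(\rho)=\lscript_y^B\paren{\lscript_x^A(\rho)}=B_y(A_x\rho A_x)B_y=\elbows{\phi_x,\rho\phi_x}B_yA_xB_y=c_{xy}\elbows{\phi_x,\rho\phi_x}B_y$. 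So both expressions are a common scalar times $A_x$, respectively $B_y$, and the correct choice is $\lambda_{xy}(\rho)=c_{xy}\elbows{\phi_x,\rho\phi_x}$.

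Finally I would check the claimed properties of $\lambda_{xy}(\rho)$. It is a product of two numbers in $\sqbrac{0,1}$ — note $\elbows{\phi_x,\rho\phi_x}\le\doubleab{\phi_x}^2=1$ because $\rho\le 1$ — so $\lambda_{xy}(\rho)\in\sqbrac{0,1}$; and $\sum_{x,y}\lambda_{xy}(\rho)=\sum_x\elbows{\phi_x,\rho\phi_x}\sum_y c_{xy}=\sum_x\elbows{\phi_x,\rho\phi_x}=\rmtr\paren{\rho\sum_x A_x}=\rmtr(\rho)=1$. There is no serious obstacle; the only point needing attention is the bookkeeping with the two square-root reductions and the observation that the single constant $c_{xy}$ governs both $A_xB_yA_x$ and $B_yA_xB_y$, which is exactly why the two a priori unrelated channels end up sharing the coefficient $\lambda_{xy}(\rho)$.
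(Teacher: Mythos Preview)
Your proof is correct and follows essentially the same route as the paper's: both reduce everything to the rank-one identities $A_xB_yA_x=\ab{\elbows{\phi_x,\psi_y}}^2A_x$ and $A_x\rho A_x=\elbows{\phi_x,\rho\phi_x}A_x$, arriving at the common coefficient $\lambda_{xy}(\rho)=\ab{\elbows{\phi_x,\psi_y}}^2\elbows{\phi_x,\rho\phi_x}$. You are a bit more explicit than the paper in noting $A_x^{1/2}=A_x$ and in actually verifying $\lambda_{xy}(\rho)\in\sqbrac{0,1}$ and $\sum_{x,y}\lambda_{xy}(\rho)=1$, which the paper states but does not check.
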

\begin{proof}
For all $\rho\in\sscript (H)$ we have that
\begin{align*}
(\lscript ^A\circ\lscript ^B)_{(x,y)}(\rho )&=\lscript _y^B\paren{\lscript _x^A(\rho )}=B_yA_x\rho A_xB_y\\
  &=\ket{\psi _x}\bra{\psi _y}\,\ket{\phi _x}\bra{\phi _x}\rho\ket{\phi _x}\bra{\phi _x}\,\ket{\psi _y}\bra{\psi _y}\\
  &=\ab{\elbows{\phi _x,\psi _y}}^2\elbows{\phi _x,\rho\phi _x}B_y
\end{align*}
Since
\begin{equation*}
A_xB_yA_x=\ket{\phi _x}\bra{\phi _x}\,\ket{\psi _y}\bra{\psi _y}\,\ket{\phi _x}\bra{\phi _x}=\ab{\elbows{\phi _x,\psi _y}}^2A_x
\end{equation*}
we obtain
\begin{equation*}
(A_xB_yA_x)^{1/2}=\ab{\elbows{\phi _x,\psi _y}}A_x
\end{equation*}
Hence,
\begin{equation*}
(\lscript ^{A\circ B})_{(x,y)}(\rho )=(A_xB_yA_x)^{1/2}\rho (A_xB_yA_x)^{1/2}=\ab{\elbows{\phi _x,\psi _y}}^2\elbows{\phi _x,\rho\phi _x}A_x
\end{equation*}
Letting $\lambda _{xy}(\rho )=\ab{\elbows{\phi _x,\psi _y}}^2\elbows{\phi _x,\rho\phi _x}$, the result follows
\end{proof}

\section{Composite Systems}  % Section 4
Let $H_1$ and $H_2$ be Hilbert spaces with $\dim H_1=n_1$ and $\dim H_2=n_2$. If $H_1,H_2$ represent quantum systems, we call $H=H_1\otimes H_2$ a \textit{composite} quantum system. For $a\in\escript (H)$, we define the \textit{reduced effects} $a^1\in\escript (H_1)$, $a^2\in\escript (H_2)$ by $a^1=\tfrac{1}{n_2}\,\rmtr _2(a)$, $a^2=\tfrac{1}{n_1}\,\rmtr _1(a)$. We view $a^i$ to be the effect $a$ as measured in system $i=1,2$. The map $a\mapsto a^1$ is a surjective effect algebra morphism from $\escript (H)$ onto $\escript (H_1)$ and similarly for $a\mapsto a^2$ \cite{gg02,gn01}. Conversely, if $a\in\escript (H_1)$, $b\in\escript (H_2)$, then $a\otimes b\in\escript (H)$ and
\begin{equation*}
(a\otimes b)^1=\tfrac{1}{n_2}\,\rmtr _2(a\otimes b)=\tfrac{1}{n_2}\,\rmtr (b)a
\end{equation*}
Similarly, $(a\otimes b)^2=\tfrac{1}{n_1}\,\rmtr (a)b$. It follows that
\begin{align*}
(a^1\otimes a^2)^1&=\tfrac{1}{n_2}\,\rmtr (a^2)a^1\\
\intertext{and}
(a^1\otimes a^2)^2&=\tfrac{1}{n_1}\,\rmtr (a^1)a^2\\
\end{align*}
An effect $a\in\escript (H)$ is \textit{factorized} if $a=b\otimes c$ for $b\in\escript (H_1)$, $c\in\escript (H_2)$ \cite{hz12}.

\begin{lem}    % Lemma 4.1
\label{lem41}
If $a\in\escript (H)$ with $a\ne 0$, then $a$ is factorized if and only if
\begin{equation}                % equation (4.1)
\label{eq41}
a=\frac{n_1n_2}{\rmtr (a)}\,a^1\otimes a^2
\end{equation}
\end{lem}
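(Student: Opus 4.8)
The plan is to prove the two implications separately. The reverse implication is immediate once one observes that $a^1$ and $a^2$ are themselves effects, and the forward implication is a one‑line consequence of the partial‑trace formulas for factorized effects recorded just above the statement, namely $(b\otimes c)^1=\tfrac1{n_2}\rmtr(c)\,b$ and $(b\otimes c)^2=\tfrac1{n_1}\rmtr(b)\,c$.

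For the implication $(\Leftarrow)$, assume \eqref{eq41}. Partial traces of positive operators are positive, so $a^1,a^2\ge 0$; and evaluating $a^1=\tfrac1{n_2}\rmtr_2(a)$ on a unit vector $\phi\in H_1$ and summing over an orthonormal basis of $H_2$, the bound $a\le 1$ gives $a^1\le 1$, and likewise $a^2\le 1$, so $a^1\in\escript(H_1)$ and $a^2\in\escript(H_2)$. Since $a\ne 0$ we have $\rmtr(a)>0$ and also $a^1\ne 0$, $a^2\ne 0$. It remains only to distribute the positive scalar $n_1n_2/\rmtr(a)$ across the two factors so that each stays an effect: taking $b=a^1/\doubleab{a^1}$ and $c=\tfrac{n_1n_2}{\rmtr(a)}\,\doubleab{a^1}\,a^2$, evaluating $a=b\otimes c\le 1$ on unit product vectors and taking the supremum in the first slot forces $c\le 1$, so $b\in\escript(H_1)$, $c\in\escript(H_2)$ and $a=b\otimes c$ is factorized.

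For the implication $(\Rightarrow)$, write $a=b\otimes c$ with $b\in\escript(H_1)$, $c\in\escript(H_2)$. As $a\ne 0$ we have $b,c\ne 0$, hence $\rmtr(b),\rmtr(c)>0$ and $\rmtr(a)=\rmtr(b)\rmtr(c)>0$. The formulas already derived give $a^1=\tfrac1{n_2}\rmtr(c)\,b$ and $a^2=\tfrac1{n_1}\rmtr(b)\,c$, so
\begin{equation*}
a^1\otimes a^2=\tfrac1{n_1n_2}\rmtr(b)\rmtr(c)\,(b\otimes c)=\tfrac{\rmtr(a)}{n_1n_2}\,a .
\end{equation*}
Multiplying through by $n_1n_2/\rmtr(a)$ yields \eqref{eq41}.

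I expect no genuine obstacle here: the whole content is the bilinearity of the partial trace on factorized operators, which is already available in the text. The only point calling for a word of care is the direction $(\Leftarrow)$, where one must check that the global scaling factor $n_1n_2/\rmtr(a)$ can be split between $a^1$ and $a^2$ so that each factor lands in the appropriate effect algebra; this is precisely where $a\le 1$ and $a\ne 0$ (equivalently $a^1\ne 0$, $a^2\ne 0$) are needed.
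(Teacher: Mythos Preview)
Your argument is correct and follows the same path as the paper's: for $(\Rightarrow)$ both compute $a^1\otimes a^2$ from $a=b\otimes c$ via the partial-trace identities and rescale. For $(\Leftarrow)$ the paper simply asserts ``if \eqref{eq41} holds, then $a$ is factorized'' without checking that the scalar $n_1n_2/\rmtr(a)$ (which may exceed $1$) can be split so that both tensor factors remain effects as the definition requires; your extra paragraph supplying this via $b=a^1/\doubleab{a^1}$ and the product-vector bound is a genuine improvement in rigor, not a different approach.
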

\begin{proof}
If \eqref{eq41} holds, then $a$ is factorized. Conversely, suppose $a$ is factorized with $a=b\otimes c$, $b\in\escript (H_1)$,
$c\in\escript (H_2)$. Then $a^1=\tfrac{1}{n_2}\,\rmtr (c)b$ and $a^2=\tfrac{1}{n_1}\,\rmtr (b)c$. Hence, $b=\tfrac{n _2}{\rmtr (c)}\,a^1$ and
$c=\tfrac{n_1}{\rmtr (b)}\,a^2$. We conclude that
\begin{equation*}
a=\frac{n_1n_2}{\rmtr (b)\rmtr (c)}\,a^1\otimes a^2=\frac{n_1n_2}{\rmtr (a)}\,a^1\otimes a^2\qedhere
\end{equation*}
\end{proof}

\begin{cor}    % Corollary 4.2
\label{cor42}
If $a\in\escript (H)$, then $a=a^1\otimes a^2$ if and only if $a=0$ or $a=1$.
\end{cor}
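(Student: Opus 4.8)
The plan is to read the corollary off directly from Lemma~\ref{lem41}. I would dispose of the ``if'' direction by computing the reductions in the two trivial cases. For $a=0$ we have $a^1=\tfrac{1}{n_2}\rmtr _2(0)=0$ and similarly $a^2=0$, so $a^1\otimes a^2=0=a$. For $a=1$ we have $\rmtr _2(1)=\rmtr _2(1\otimes 1)=n_2\,1$, hence $a^1=\tfrac{1}{n_2}\,\rmtr _2(1)=1$ (the identity on $H_1$), and likewise $a^2=1$; therefore $a^1\otimes a^2=1\otimes 1=1=a$.

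For the ``only if'' direction, suppose $a=a^1\otimes a^2$. Then $a$ is factorized by definition, so if $a\ne 0$ Lemma~\ref{lem41} applies and yields
\begin{equation*}
a=\frac{n_1n_2}{\rmtr (a)}\,a^1\otimes a^2=\frac{n_1n_2}{\rmtr (a)}\,a .
\end{equation*}
Since $a\ne 0$, this forces $\rmtr (a)=n_1n_2=\dim H$. Now invoke $0\le a\le 1$: the operator $1-a$ is positive with $\rmtr (1-a)=\dim H-\rmtr (a)=0$, and in finite dimensions a positive operator of vanishing trace is $0$, so $a=1$. Hence $a=0$ or $a=1$, as claimed.

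I do not expect a genuine obstacle here; the corollary is essentially a one-line consequence of Lemma~\ref{lem41}. The only points deserving a moment's care are the identity $\rmtr _2(1_H)=n_2\,1$ used in the $a=1$ computation and the elementary fact that a positive operator with zero trace must vanish; everything else is immediate.
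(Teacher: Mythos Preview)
Your argument is correct and follows the same route as the paper's proof: apply Lemma~\ref{lem41} to get $\rmtr(a)=n_1n_2$ when $a\ne 0$, and then conclude $a=1$. The paper is simply terser, asserting the ``if'' direction and the step from $\rmtr(a)=n_1n_2$ to $a=1$ without spelling them out as you do.
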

\begin{proof}
If $a=0$ or $a=1$, then clearly $a=a^1\otimes a^2$. Conversely, if $a=a^1\otimes a^2$, then by Lemma~\ref{lem41}, $a=0$ or
$\rmtr (a)=n_1n_2$. In the latter case, $a=1$.
\end{proof}

An effect is \textit{indecomposable} if it has the form $a=\lambda b$ where $0\le\lambda\le 1$ and $b$ is an atom.

\begin{thm}    % Theorem 4.3
\label{thm43}
Let $a\in\escript (H)$ be an atom $a=P_\psi$ where $H=H_1\otimes H_2$.
{\rm{(a)}}\enspace $a$ is factorized if and only if $a^1$ and $a^2$ are indecomposable.
{\rm{(b)}}\enspace We  can arrange the nonzero eigenvalues $\alpha _1,\alpha _2,\ldots ,\alpha _n$ of $a^1$ and the nonzero eigenvalues
$\beta _1,\beta _2,\ldots ,\beta _n$ of $a^2$ so that $\alpha _i=\tfrac{n_1}{n_2}\,\beta _i$, $i=1,2,\ldots ,n$. Hence, if $n_1=n_2$, then the eigenvalues of $a^1$ and $a^2$ are identical.
\end{thm}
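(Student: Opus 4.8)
The plan is to push everything through the Schmidt decomposition of the unit vector $\psi\in H=H_1\otimes H_2$. Write $\psi=\sum_{i=1}^n\sqrt{\lambda_i}\,e_i\otimes f_i$, where $n$ is the Schmidt rank, $\{e_i\}_{i=1}^n$ is orthonormal in $H_1$, $\{f_i\}_{i=1}^n$ is orthonormal in $H_2$, and $\lambda_i>0$ with $\sum_{i=1}^n\lambda_i=1$. Expanding $a=P_\psi=\ket{\psi}\bra{\psi}$ and taking partial traces term by term, orthonormality of $\{e_i\}$ and $\{f_i\}$ annihilates the off-diagonal pieces and gives
\[
\rmtr_2(a)=\sum_{i=1}^n\lambda_i\ket{e_i}\bra{e_i},\qquad \rmtr_1(a)=\sum_{i=1}^n\lambda_i\ket{f_i}\bra{f_i},
\]
so that $a^1=\tfrac{1}{n_2}\sum_{i=1}^n\lambda_i\ket{e_i}\bra{e_i}$ and $a^2=\tfrac{1}{n_1}\sum_{i=1}^n\lambda_i\ket{f_i}\bra{f_i}$. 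In particular $a^1$ and $a^2$ each have exactly $n$ nonzero eigenvalues, namely $\lambda_i/n_2$ and $\lambda_i/n_1$, $i=1,\dots,n$.

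Part (b) then drops out of this: set $\alpha_i=\lambda_i/n_2$ and $\beta_i=\lambda_i/n_1$; with this (Schmidt) ordering $\alpha_i=(n_1/n_2)\beta_i$ for every $i$, and when $n_1=n_2$ the two lists coincide.

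For part (a) I would argue the chain: $a$ is factorized $\iff n=1\iff a^1$ is indecomposable $\iff a^2$ is indecomposable. If $a$ is factorized, $a=b\otimes c$ with $b\in\escript(H_1)$, $c\in\escript(H_2)$ both nonzero since $a\ne0$; from $\mathrm{rank}(b\otimes c)=\mathrm{rank}(b)\,\mathrm{rank}(c)$ and $a$ an atom, $b$ and $c$ have rank one, and $\|b\|\,\|c\|=\|a\|=1$ with $\|b\|,\|c\|\le1$ forces $b=P_e$, $c=P_f$, so $\psi=e\otimes f$ up to a phase and $n=1$. (Alternatively, use the identity $(b\otimes c)^1=\tfrac{1}{n_2}\rmtr(c)\,b$ from the text, equivalently Lemma~\ref{lem41}, to see directly that $a^1$ is a positive multiple of $b$, hence indecomposable.) Conversely, if $n=1$ then $\lambda_1=1$, $\psi=e_1\otimes f_1$, and $a=P_{e_1}\otimes P_{f_1}$ is factorized; in that case $a^1=\tfrac{1}{n_2}P_{e_1}$ and $a^2=\tfrac{1}{n_1}P_{f_1}$ are indecomposable ($\lambda=1/n_2$, $1/n_1$). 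Finally, if $a^1$ is indecomposable it has rank at most $1$, so the formula for $a^1$ above forces $n\le1$, i.e.\ $n=1$; the same holds with $a^2$.

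The only point that needs a word of care — and it is really just bookkeeping — is the converse of (a): one must notice that the number of nonzero eigenvalues of $a^i$ equals the Schmidt rank of $\psi$ (immediate from the partial-trace formulas), and that for a positive contraction ``indecomposable'' means exactly ``rank $\le1$''. Everything else is the partial-trace computation together with the standard Schmidt decomposition, which I take as known from the cited references.
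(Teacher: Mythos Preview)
Your proof is correct and follows essentially the same route as the paper's: both compute $a^1$ and $a^2$ via the Schmidt decomposition of $\psi$, read off the eigenvalue relation for (b), and for (a) identify ``$a$ factorized'' with Schmidt rank one and hence with indecomposability of $a^1$ and $a^2$. If anything, your converse for (a) is spelled out more explicitly than the paper's; the only cosmetic quibble is that your symbol $n$ for the Schmidt rank clashes with $n_1,n_2$ and with the $n$ already used in the theorem statement.
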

\begin{proof}
The unit vector $\psi\in H$ has a Schmidt decomposition $\psi =\sum\limits _{i=1}^m\lambda _i\psi _i\otimes\phi _i$, $\lambda _i\ge 0$,
$\sum\lambda _i^2=1$. We have that
\begin{align*}
a=\ket{\psi}\bra{\psi}&=\ket{\sum\lambda _i\psi _i\otimes\phi _i}\bra{\sum\lambda _j\psi _j\otimes\phi _j}
  =\sum _{i,j}\lambda _i\lambda _j\ket{\psi _i\otimes\phi _i}\bra{\psi _j\otimes\phi _j}\\
  &=\sum _{i,j}\lambda _i\lambda _j\ket{\psi _i}\bra{\psi _j}\otimes\ket{\phi _i}\bra{\phi _j}
\end{align*}
Hence, 
\begin{align}                % equation (4.2)
\label{eq42}
a^1&=\tfrac{1}{n_2}\,\rmtr _2(a)
  =\tfrac{1}{n_2}\,\sum _{i,j}\lambda _i\lambda _j\rmtr _2\paren{\ket{\psi _i}\bra{\psi _j}\otimes\ket{\phi _i}\bra{\phi _j}}\notag\\
  &=\tfrac{1}{n_2}\,\sum _{i,j}\lambda _i\lambda _j\delta _{ij}\ket{\psi _i}\bra{\psi _j}=\tfrac{1}{n_2}\,\sum\lambda _i^2P_{\psi _i}
\end{align}
and similarly
\begin{equation}                % equation (4.3)
\label{eq43}
a^2=\tfrac{1}{n_1}\,\sum\lambda _i^2P_{\phi _i}
\end{equation}
Now $a$ is factorized if and only if $\psi$ is factorized which is equivalent to $m=1$ and $\psi =\psi _1\otimes\phi _1$. Applying \eqref{eq42} and \eqref{eq43} we conclude that $a$ is factorized if and only if $a^1=\tfrac{1}{n_2}\,\lambda _1^2P_{\psi _1}$ and
$a^2=\tfrac{1}{n_1}\,\lambda _1^2P_{\phi _1}$ in which case $a^1$ and $a^2$ are indecomposable. This completes the proof of (a). To prove (b), we see from \eqref{eq42}, \eqref{eq43} that the eigenvalues of $a^1,a^2$ are $\alpha _i=\tfrac{1}{n_2}\,\lambda _i^2$ and
$\beta _i=\tfrac{1}{n_1}\,\lambda _i^2$. It follows that $\alpha _i=\tfrac{n_1}{n_2}\,\beta _i$.
\end{proof}

If $A\in\oscript (H_1\otimes H_2)$ we define the \textit{reduced observables} $A^1\in\oscript (H_1)$, $A^2\in\oscript (H_2)$ by
$A^1=\brac{A_x^1\colon x\in\Omega _A}$ and $A^2=\brac{A_x^2\colon x\in\Omega _A}$. Note that $A^1(A^2)$ is indeed an observable because
\begin{equation*}
\sum _{x\in\Omega _A}A_x^1=\sum _{x\in\Omega _A}\tfrac{1}{n_2}\,\rmtr _2(A_x)=\tfrac{1}{n_2}\,\rmtr _2\paren{\sum _{x\in\Omega _A}A_x}
  =\tfrac{1}{n_2}\,\rmtr _2(1_1\otimes 1_2)=1_1
\end{equation*}

\begin{lem}    % Lemma 4.4
\label{lem44}
If $A\in\oscript (H_1\otimes H_2)$ and $\rho _1\in\sscript (H_1)$, then
\begin{equation*}
\Phi _{\rho _1}^{A^1}=\Phi _{\rho _1\otimes 1_2/n_2}^A
\end{equation*}
\end{lem}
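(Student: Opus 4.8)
The plan is to unravel both sides directly from the definitions of the probability distribution $\Phi^A_\rho$ and the reduced observable $A^1$, and to show they agree as set functions on $2^{\Omega_A}$. First I would fix $X\subseteq\Omega_A$ and $\rho_1\in\sscript(H_1)$ and write out the left-hand side: by definition $\Phi^{A^1}_{\rho_1}(X)=\rmtr_{H_1}\!\sqbrac{\rho_1 A^1_X}$, where $A^1_X=\sum_{x\in X}A^1_x=\sum_{x\in X}\tfrac{1}{n_2}\rmtr_2(A_x)=\tfrac{1}{n_2}\rmtr_2(A_X)$. So the left side becomes $\tfrac{1}{n_2}\rmtr_{H_1}\!\sqbrac{\rho_1\,\rmtr_2(A_X)}$.

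Next I would expand the right-hand side: $\Phi^A_{\rho_1\otimes 1_2/n_2}(X)=\rmtr_{H_1\otimes H_2}\!\sqbrac{(\rho_1\otimes \tfrac{1}{n_2}1_2)A_X}=\tfrac{1}{n_2}\rmtr_{H_1\otimes H_2}\!\sqbrac{(\rho_1\otimes 1_2)A_X}$. The key identity to invoke is the standard partial-trace relation $\rmtr_{H_1\otimes H_2}\!\sqbrac{(\rho_1\otimes 1_2)T}=\rmtr_{H_1}\!\sqbrac{\rho_1\,\rmtr_2(T)}$ for any $T\in\lscript(H_1\otimes H_2)$, applied with $T=A_X$. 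This immediately matches the expression obtained for the left side, and since $X$ was arbitrary the two distributions coincide on all of $2^{\Omega_A}$.

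The only step requiring any care is the partial-trace identity $\rmtr\sqbrac{(\rho_1\otimes 1_2)T}=\rmtr_1\sqbrac{\rho_1\,\rmtr_2(T)}$; everything else is bookkeeping. I would justify it by linearity: it suffices to check it on product operators $T=S_1\otimes S_2$, where $\rmtr\sqbrac{(\rho_1\otimes 1_2)(S_1\otimes S_2)}=\rmtr(\rho_1 S_1)\rmtr(S_2)$ and $\rmtr_1\sqbrac{\rho_1\,\rmtr_2(S_1\otimes S_2)}=\rmtr_1\sqbrac{\rho_1 S_1\rmtr(S_2)}=\rmtr(\rho_1 S_1)\rmtr(S_2)$, which agree; the general case follows since such product operators span $\lscript(H_1\otimes H_2)$. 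I do not anticipate a real obstacle here — this is the kind of elementary partial-trace manipulation that the paper treats as known (it cites \cite{hz12,nc00} for partial traces) — so the proof is essentially a two-line chain of equalities once $A^1_X=\tfrac{1}{n_2}\rmtr_2(A_X)$ is noted. One could alternatively phrase the whole argument outcome-by-outcome using Lemma~\ref{lem33}-style reasoning, but the direct computation on events $X$ is cleaner and avoids any appeal to surjections, since $A\mapsto A^1$ is a reduction rather than a coarse-graining.
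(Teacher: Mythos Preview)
Your proposal is correct and follows essentially the same approach as the paper's proof: both fix $X\subseteq\Omega_A$, unwind $A^1_X=\tfrac{1}{n_2}\rmtr_2(A_X)$, and then invoke the partial-trace identity $\rmtr\sqbrac{(\rho_1\otimes 1_2)A_X}=\rmtr\sqbrac{\rho_1\,\rmtr_2(A_X)}$ to match the two sides. The only difference is that you spell out the justification of that identity on product operators, which the paper takes for granted.
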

\begin{proof}
For $X\subseteq\Omega _A$ we have that
\begin{align*}
\Phi _{\rho _1}^{A^1}(X)&=\rmtr (\rho _1A_x^1)=\rmtr\sqbrac{\rho _1\tfrac{1}{n_2}\,\rmtr _2(A_X)}
  =\tfrac{1}{n_2}\,\rmtr\sqbrac{\rho _1\rmtr _2(A_X)}\\
  &=\tfrac{1}{n_2}\,\rmtr\sqbrac{A_X(\rho _1\otimes 1_2)}=\rmtr\sqbrac{\paren{\rho _1\otimes\tfrac{1}{n_2}\,1_2}A_X}
    =\Phi _{\rho _1\otimes 1_2/n_2}(X)
\end{align*}
The result now follows.
\end{proof}

In a similar way
\begin{equation*}
\Phi _{\rho _2}^{A^2}=\Phi _{1_1/n_1\otimes\rho _2}^A
\end{equation*}

For $A\in\oscript (H_1)$ we define the $A$-\textit{random measure} on $\Omega _A$ by
\begin{equation*}
\mu ^A(X)=\tfrac{1}{n_1}\,\rmtr (A_X)=\rmtr\paren{\tfrac{1_1}{n_1}\,A_X}=\Phi _{1_1/n_1}^A(X)
\end{equation*}
for all $X\subseteq\Omega _A$. Thus, $\mu ^A$ is the distribution of $A$ in the random state $1_1/n_1$. If $A_1\in\oscript (H_1)$,
$A_2\in\oscript (H_2)$, we define the \textit{composite observable}
\begin{equation*}
B_{(x,y)}=A_{1,x}\otimes A_{2,y}\in\oscript (H_1\otimes H_2)
\end{equation*}
In this case, $\Omega _B=\Omega _{A_1}\times\Omega _{A_2}$ and for $Z\subseteq\Omega _B$ we have that
\begin{equation*}
B_Z=\sum _{(x,y)\in Z}B_{(x,y)}
\end{equation*}
Hence, $B_{X\times Y}=A_{1,X}\otimes A_{2,Y}$.

\begin{lem}    % Lemma 4.5
\label{lem45}
$B_{X\times Y}^1=\mu ^{A^2}(Y)A_{1,X}$ and $B_{X\times Y}^2=\mu ^{A_1}(X)A_{2,Y}$.
\end{lem}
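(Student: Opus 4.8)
The plan is to compute the reduced effect $B_{X\times Y}^1 = \tfrac{1}{n_2}\,\rmtr_2(B_{X\times Y})$ directly, using the factorization $B_{X\times Y} = A_{1,X}\otimes A_{2,Y}$ established just above the lemma statement. The partial trace of a tensor product splits, so $\rmtr_2(A_{1,X}\otimes A_{2,Y}) = \rmtr(A_{2,Y})\,A_{1,X}$; dividing by $n_2$ gives $B_{X\times Y}^1 = \tfrac{1}{n_2}\,\rmtr(A_{2,Y})\,A_{1,X}$. Then I would recognize the scalar $\tfrac{1}{n_2}\,\rmtr(A_{2,Y})$ as exactly the $A_2$-random measure $\mu^{A_2}(Y)$ by its definition (with $n_1$ there replaced by $n_2$ since $A_2\in\oscript(H_2)$ and $\dim H_2 = n_2$). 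This yields $B_{X\times Y}^1 = \mu^{A_2}(Y)\,A_{1,X}$, and the second identity follows by the symmetric computation with the roles of the two factors interchanged, using $B_{X\times Y}^2 = \tfrac{1}{n_1}\,\rmtr_1(B_{X\times Y}) = \tfrac{1}{n_1}\,\rmtr(A_{1,X})\,A_{2,Y} = \mu^{A_1}(X)\,A_{2,Y}$.

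One point I would flag: the lemma as stated writes $\mu^{A^2}(Y)$ with a superscript, whereas the composite observable was built from $A_2\in\oscript(H_2)$, so the intended object is $\mu^{A_2}(Y)$ — the random measure of the second factor observable. I would either note this as a harmless typographical matter or simply write $\mu^{A_2}(Y)$ throughout, consistent with the definition $\mu^{A_2}(Y) = \tfrac{1}{n_2}\,\rmtr(A_{2,Y})$.

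There is essentially no obstacle here: the only facts needed are that $\rmtr_2(S\otimes T) = \rmtr(T)\,S$ for the partial trace over the second tensor factor (and the analogous statement for $\rmtr_1$), the definitions of $A^1$, $A^2$ and $\mu^A$, and the identity $B_{X\times Y} = A_{1,X}\otimes A_{2,Y}$ already recorded. The "hard part," if any, is merely being careful about which dimension appears in which normalization — the factor $\tfrac{1}{n_2}$ in $a^1$ pairs with the trace over $H_2$, and the factor $\tfrac{1}{n_1}$ in $a^2$ pairs with the trace over $H_1$ — so that the random measure that emerges is the one associated with the factor that gets traced out. Once that bookkeeping is done the result drops out in two short lines.
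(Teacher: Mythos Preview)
Your proposal is correct and follows essentially the same route as the paper: compute the reduced effect via the partial-trace identity $\rmtr_2(S\otimes T)=\rmtr(T)\,S$ applied to $B_{X\times Y}=A_{1,X}\otimes A_{2,Y}$, then recognize $\tfrac{1}{n_2}\rmtr(A_{2,Y})$ as $\mu^{A_2}(Y)$. The paper does the computation at the level of single outcomes $(x,y)$ and then passes to $X\times Y$, but this is a cosmetic difference; your observation that $\mu^{A^2}$ should read $\mu^{A_2}$ is also correct.
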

\begin{proof}
For $x\in\Omega _{A_1}$, $y\in\Omega _{A_2}$ we obtain
\begin{equation*}
B_{(x,y)}^1=\tfrac{1}{n_2}\,\rmtr _2\sqbrac{B_{(x,y)}}=\tfrac{1}{n_2}\,\rmtr (A_{1,x}\otimes A_{2,y})=\tfrac{1}{n_2}\,\rmtr (A_{2,y})A_{1,x}
\end{equation*}
Hence,
\begin{equation*}
B_{X\times Y}^1=\tfrac{1}{n_2}\,\rmtr (A_{2,Y})A_{1,X}=\mu ^{A_2}(Y)A_{1,X}
\end{equation*}
The second equation is similar.
\end{proof}

A \textit{transition probability} from $\Omega _1$ to $\Omega _2$ is a map $\nu\colon\Omega _1\times\Omega _2\to\sqbrac{0,1}$ satisfying
$\sum\limits _{y\in\Omega _2}\nu _{xy}=1$ for all $x\in\Omega _1$. (The matrix $\sqbrac{\nu _{xy}}$ is called a \textit{stochastic matrix}.) Let
$A\in\oscript (H_!)$ with outcome-space $\Omega _1$ and let $\nu$ be a transition probability from $\Omega _1$ to $\Omega _2$. Then
$(\nu\tbullet A)_y=\sum\limits _{x\in\Omega _1}\nu _{xy}A_x$ is an observable on $H_1$ with outcome-space $\Omega _2$ called a
\textit{post-processing} of $A$ from $\Omega _1$ to $\Omega _2$ \cite{hrsz09}. If we also have $B\in\oscript (H_2)$ with outcome-space
$\Omega _3$ and $\mu$ a transition probability from $\Omega _3$ to $\Omega _4$, we can form the post-processing $\mu\tbullet B$.

\begin{thm}    % Theorem 4.6
\label{thm46}
{\rm{(a)}}\enspace $(\nu\tbullet A)\otimes (\mu\tbullet B)\in\oscript (H_1\otimes H_2)$ with outcome-space $\Omega _2\times\Omega _4$ and is a post-processing $\alpha\tbullet (A\otimes B)$ from $\Omega _1\times\Omega _3$ to $\Omega _2\times\Omega _4$ where
$\alpha\paren{(x,r),(y,s)}=\nu _{xy}\mu _{rs}$.
{\rm{(b)}}\enspace If $A\in\oscript (H_1\otimes H_2)$, then $(\nu\tbullet A)^1=\nu\tbullet A^1$ and $(\nu\tbullet A)^2=\nu\tbullet A^2$. 
\end{thm}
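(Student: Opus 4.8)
The plan is to prove both parts by direct computation: part (a) reduces to the bilinearity of the tensor product together with the facts that the row sums of $\nu$ and of $\mu$ are $1$, and part (b) reduces to the linearity of the partial trace. Neither part poses a genuine obstacle; the only thing that needs care is keeping the four outcome-spaces $\Omega_1,\Omega_2,\Omega_3,\Omega_4$ straight and checking that the proposed $\alpha$ really is a transition probability.

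For part (a), I would first note that $(\nu\tbullet A)\otimes(\mu\tbullet B)$ is, by the definition of a composite observable, the family with value $(\nu\tbullet A)_y\otimes(\mu\tbullet B)_s$ at $(y,s)\in\Omega_2\times\Omega_4$, and it lies in $\oscript(H_1\otimes H_2)$ because
\[
\sum_{(y,s)}(\nu\tbullet A)_y\otimes(\mu\tbullet B)_s=\Bigl(\sum_{y}(\nu\tbullet A)_y\Bigr)\otimes\Bigl(\sum_{s}(\mu\tbullet B)_s\Bigr)=1_1\otimes 1_2 .
\]
Next I would check that $\alpha\bigl((x,r),(y,s)\bigr)=\nu_{xy}\mu_{rs}$ defines a transition probability from $\Omega_1\times\Omega_3$ to $\Omega_2\times\Omega_4$: for each fixed $(x,r)$,
\[
\sum_{(y,s)}\nu_{xy}\mu_{rs}=\Bigl(\sum_{y}\nu_{xy}\Bigr)\Bigl(\sum_{s}\mu_{rs}\Bigr)=1 .
\]
Finally, since $(A\otimes B)_{(x,r)}=A_x\otimes B_r$, expanding the post-processing and factoring the double sum gives
\[
\bigl[\alpha\tbullet(A\otimes B)\bigr]_{(y,s)}=\sum_{(x,r)}\nu_{xy}\mu_{rs}\,A_x\otimes B_r=\Bigl(\sum_{x}\nu_{xy}A_x\Bigr)\otimes\Bigl(\sum_{r}\mu_{rs}B_r\Bigr)=(\nu\tbullet A)_y\otimes(\mu\tbullet B)_s ,
\]
which is precisely $\bigl[(\nu\tbullet A)\otimes(\mu\tbullet B)\bigr]_{(y,s)}$; this establishes (a).

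For part (b), let $A\in\oscript(H_1\otimes H_2)$ have outcome-space $\Omega_A$ and let $\nu$ be a transition probability from $\Omega_A$ to $\Omega$, so that $(\nu\tbullet A)_y=\sum_{x\in\Omega_A}\nu_{xy}A_x$. Using the linearity of the partial trace $\rmtr_2$ over $H_2$,
\[
(\nu\tbullet A)^1_y=\tfrac{1}{n_2}\,\rmtr_2\bigl[(\nu\tbullet A)_y\bigr]=\tfrac{1}{n_2}\sum_{x}\nu_{xy}\,\rmtr_2(A_x)=\sum_{x}\nu_{xy}A_x^1=(\nu\tbullet A^1)_y ,
\]
and $(\nu\tbullet A)^2=\nu\tbullet A^2$ follows the same way with $\rmtr_1$ replacing $\rmtr_2$. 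The main point requiring care throughout, and the closest thing to an obstacle, is the verification that $\alpha$ is a transition probability and that the double summation in (a) splits as a tensor product of two single sums; both are immediate consequences of the hypotheses $\sum_y\nu_{xy}=1$ and $\sum_s\mu_{rs}=1$ and the bilinearity of $\otimes$.
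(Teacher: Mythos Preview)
Your proof is correct and follows essentially the same approach as the paper: both verify that $\alpha$ is a transition probability by splitting the double sum, and both establish the identity in (a) by expanding the tensor of the two post-processings and using bilinearity of $\otimes$ (the paper runs the chain of equalities from $(\nu\tbullet A)_y\otimes(\mu\tbullet B)_s$ to $[\alpha\tbullet(A\otimes B)]_{(y,s)}$, you run it in the reverse direction, but it is the same computation). Part (b) is identical to the paper's argument via linearity of the partial trace.
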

\begin{proof}
(a)\enspace The map $\alpha\colon\Omega _1\times\Omega _3\to\Omega _2\times\Omega _4$ is a transition probability because
$\alpha _{((x,r),(y,s))}\ge 0$ and
\begin{equation*}
\sum _{(y,s)\in\Omega _2\times\Omega _4}\alpha _{((x,r),(y,s))}=\sum _{y,s}\nu _{xy}\mu _{rs}=1
\end{equation*}
Moreover, $\nu\tbullet A)\otimes (\mu\tbullet B)\in\oscript (H_1\otimes H_2)$ with outcome-space $\Omega _2\times\Omega _4$ and we have that
\begin{align*}
\sqbrac{(\nu\tbullet A)\otimes (\mu\tbullet B)}_{(y,s)}&=(\nu\tbullet A)_y\otimes (\mu\tbullet B)_s\\
  &=\paren{\sum _{x\in\Omega _1}\nu _{xy}A_x}\otimes\paren{\sum _{r\in\Omega _3}\mu _{rs}B_r}\\
  &=\sum _{x\in\Omega _1}\sum _{r\in\Omega _3}\nu _{xy}\mu _{rs}A_x\otimes B_r\\
  &=\sum _{x,r}\alpha _{((x,r),(y,s))}A_x\otimes B_r=\sqbrac{\alpha\tbullet (A\otimes B)}_{(y,s)}
\end{align*}
Hence, $(\nu\tbullet A)\otimes (\mu\tbullet B)=\alpha\tbullet (A\otimes B)$.
(b)\enspace This follows from
\begin{align*}
(\nu\tbullet A)_y^1&=\paren{\sum _x\nu _{xy}A_x}^1=\tfrac{1}{n_2}\,\rmtr _2\paren{\sum _x\nu _{xy}A_x}
  =\tfrac{1}{n_2}\,\sum _x\nu _{xy}\rmtr _2(A_x)\\
  &=\sum _x\nu _{xy}A_x^1=(\nu\tbullet A^1)_y
\end{align*}
That $(\nu\tbullet A)^2=\nu\tbullet A^2$ is similar.
\end{proof}

We have seen in Theorem~\ref{thm32} that coexistence is equivalent to joint measurability. This is used in the next theorem \cite{hmr14}.

\begin{thm}    % Theorem 4.7
\label{thm47}
{\rm{(a)}}\enspace If $A_1,B_1\in\oscript (H_1)$ coexist with joint observable $C_2$, then $A_1\otimes A_2$, $B_1\otimes B_2$ coexist with joint observable $C=C_1\otimes C_2$. 
{\rm{(b)}}\enspace If $A,B\in\oscript (H_1\otimes H_2)$ coexist with joint observable $C$, then $A^1,B^1$ coexist with joint observable $C^1$ and $A^2,B^2$ coexist with joint observable $C^2$.
\end{thm}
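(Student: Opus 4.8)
The plan is to prove each part by exhibiting the explicit functions witnessing the ``part of'' relation and checking the marginal conditions \eqref{eq32}; all the hard structural work has already been done in Theorems~\ref{thm32}, \ref{thm46}, and~\ref{thm47}-type reasoning about reduced observables, so this should be essentially a bookkeeping exercise.

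For part~(a), since $A_1,B_1\in\oscript(H_1)$ are jointly measurable (equivalently coexist, by Theorem~\ref{thm32}) with joint observable $C_1\in\oscript(H_1)$, we have surjections $f\colon\Omega_{C_1}\to\Omega_{A_1}$, $g\colon\Omega_{C_1}\to\Omega_{B_1}$ with $A_1=f(C_1)$, $B_1=g(C_1)$; likewise (implicitly) $A_2,B_2$ coexist in $H_2$ with some joint observable $C_2$ and surjections $f',g'$. I would set $C=C_1\otimes C_2\in\oscript(H_1\otimes H_2)$, so $\Omega_C=\Omega_{C_1}\times\Omega_{C_2}$, and define the surjections $F=f\times f'$ and $G=g\times g'$ from $\Omega_C$ onto $\Omega_{A_1}\times\Omega_{A_2}$ and $\Omega_{B_1}\times\Omega_{B_2}$ respectively. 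Then I must check $(A_1\otimes A_2)=F(C)$ and $(B_1\otimes B_2)=G(C)$: expanding, $C_{F^{-1}(x_1,x_2)}=\sum\brac{C_{(u,v)}\colon f(u)=x_1,\,f'(v)=x_2}=\paren{\sum_{f(u)=x_1}C_{1,u}}\otimes\paren{\sum_{f'(v)=x_2}C_{2,v}}=A_{1,x_1}\otimes A_{2,x_2}$, using that the tensor product distributes over the sums defining $C_{1,f^{-1}(x_1)}$ and $C_{2,f'^{-1}(x_2)}$. The same computation gives $B_1\otimes B_2=G(C)$. So $A_1\otimes A_2$ and $B_1\otimes B_2$ both lie in $\capctilde$, hence coexist, and $C=C_1\otimes C_2$ is a joint observable by the equivalence in Theorem~\ref{thm32}.

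For part~(b), suppose $A,B\in\oscript(H_1\otimes H_2)$ coexist with joint observable $C\in\oscript(H_1\otimes H_2)$, so $A=f(C)$, $B=g(C)$ for surjections $f\colon\Omega_C\to\Omega_A$, $g\colon\Omega_C\to\Omega_B$. The natural guess is that the \emph{same} functions $f,g$ witness coexistence of the reduced observables. Indeed, applying the reduction map $a\mapsto a^1=\tfrac{1}{n_2}\rmtr_2(a)$, which is an effect-algebra morphism (it is additive), to the identity $A_x=C_{f^{-1}(x)}=\sum\brac{C_y\colon f(y)=x}$ yields $A_x^1=\tfrac{1}{n_2}\rmtr_2\paren{\sum_{f(y)=x}C_y}=\sum_{f(y)=x}C_y^1=(C^1)_{f^{-1}(x)}$, so $A^1=f(C^1)$, and likewise $B^1=g(C^1)$. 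Hence $A^1,B^1\in\widetilde{C^1}$, so they coexist, and by Theorem~\ref{thm32} they are jointly measurable. To identify $C^1$ as \emph{the} joint observable one must check the marginal formula \eqref{eq32}; but this is automatic once $A^1=f(C^1)$ and the relevant coordinate-projection structure of $f,g$ is in place --- more carefully, since $A,B$ have outcome-spaces of the form required and \eqref{eq32} already held for $A,B,C$, applying $a\mapsto a^1$ to \eqref{eq32} preserves it because that map commutes with the finite sums appearing there. The argument for $A^2,B^2,C^2$ is identical with $a\mapsto a^2=\tfrac{1}{n_1}\rmtr_1(a)$.

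The main obstacle, such as it is, lies in part~(b): one must be slightly careful that ``joint observable'' in the statement means an observable whose outcome-space is literally the product $\Omega_{A^1}\times\Omega_{B^1}$ and whose coordinate marginals are $A^1,B^1$, not merely some observable of which they are parts. The resolution is that we may take the joint observable produced by the construction in the proof of Theorem~\ref{thm32} --- namely $h(C)$ where $h(y)=(f(y),g(y))$ --- and then observe that $h(C)^1=h(C^1)$ by the same morphism argument, so that the joint observable of $A^1,B^1$ is exactly the reduction of the joint observable of $A,B$; if the intended reading is the weaker one, then taking $C^1$ directly suffices. Either way, no genuinely new computation beyond the additivity of the partial-trace reductions is needed, and the whole proof reduces to distributing $\rmtr_2$ (resp.\ $\rmtr_1$) and $\otimes$ across the defining sums.
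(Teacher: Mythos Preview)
Your proposal is correct and follows essentially the same route as the paper: in (a) you distribute the tensor product over the defining sums to verify that $C_1\otimes C_2$ has the right marginals, and in (b) you use additivity of the partial-trace reduction to push the marginal identity $A_x=\sum_y C_{(x,y)}$ down to $A_x^1=\sum_y C_{(x,y)}^1$. The only cosmetic difference is that the paper works directly with the marginal condition \eqref{eq32} (taking $\Omega_{C_1}=\Omega_{A_1}\times\Omega_{B_1}$ from the outset), whereas you phrase things via arbitrary surjections $f,g$ and then invoke Theorem~\ref{thm32}; your closing paragraph already notes this and resolves it correctly.
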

\begin{proof}
(a)\enspace We write $C_{1,(x,y)}$ for $(x,y)\in\Omega _{A_1}\times\Omega _{B_1}$ and $C_{2,(x',y')}$ for
$(x',y')\in\Omega _{A_2}\times\Omega _{B_2}$. Then
\begin{equation*}
C_{(x,y,x',y')}=C_{1,(x,y)}\otimes C_{2,(x',y')}
\end{equation*}
and we have that
\begin{equation*}
\sum _{(y,y')}C_{(x,y,x',y')}=\sum _yC_{1,(x,y)}\otimes\sum _{y'}C_{2,(x',y')}=A_{1,x}\otimes A_{2,x'}=(A_1\otimes A_2)_{(x,x')}
\end{equation*}
Moreover,
\begin{equation*}
\sum _{(x,x')}C_{(x,y,x',y')}=\sum _xC_{1,(x,y)}\otimes\sum _{x'}C_{2,(x',y')}=B_{1,y}\otimes B_{2,y'}=(B_1\otimes B_2)_{(y,y')}
\end{equation*}
and the result follows.
(b)\enspace For all $(x,y)\in\Omega _A\times\Omega _B$ we obtain
\begin{align*}
A_x^1=\sqbrac{\sum _yC_{(x,y)}}^1&=\tfrac{1}{n_2}\,\rmtr _2\sqbrac{\sum _yC_{(x,y)}}=\sum _y\sqbrac{\tfrac{1}{n_2}\,\rmtr (C_{(x,y)})}\\
   &=\sum _yC_{(x,y)}^1
\end{align*}
Similarly, $B_y^1=\sum _xC_{(x,y)}^1$ so $A^1,B^1$ coexist with joint observable $C^1$. The result for $A^2,B^2$ is similar.
\end{proof}

For an instrument $\iscript\in\rmin (H_1\otimes H_2)$ on the composite system, the \textit{reduced instrument} on system~1 is defined by
\cite{gud220,gud320}
\begin{equation*}
\iscript _x^1(\rho _1)=\tfrac{1}{n_2}\,\rmtr _2\sqbrac{\iscript _x(\rho _1\otimes 1_2)}
\end{equation*}
for all $\rho _1\in\sscript (H_1)$, $x\in\Omega _\iscript$. Similarly,
\begin{equation*}
\iscript _x^2(\rho _1)=\tfrac{1}{n_1}\,\rmtr _1\sqbrac{\iscript _x(1_1\otimes\rho _2)}
\end{equation*}
for all $\rho _2\in\sscript (H_2)$, $x\in\Omega _\iscript$.

\begin{thm}    % Theorem 4.8
\label{thm48}
$(\iscript ^1)^\wedge =(\iscripthat\,)^1$ and $(\iscript ^2)^\wedge =(\iscripthat\,)^2$.
\end{thm}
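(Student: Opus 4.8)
The plan is to verify directly that $(\iscripthat\,)^1$ satisfies the relation that characterizes $(\iscript^1)^\wedge$, and then invoke uniqueness. Recall that $\iscript^1\in\rmin (H_1)$ (this is where I would quote \cite{gud220,gud320}, so that $(\iscript^1)^\wedge$ is well defined), and that $(\iscript^1)^\wedge$ is the unique observable on $H_1$ with $\rmtr\sqbrac{\rho _1(\iscript^1)^\wedge _x}=\rmtr\sqbrac{\iscript^1_x(\rho _1)}$ for all $\rho _1\in\sscript (H_1)$ and $x\in\Omega _\iscript$. So it suffices to show $\rmtr\sqbrac{\rho _1(\iscripthat _x)^1}=\rmtr\sqbrac{\iscript^1_x(\rho _1)}$.

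First I would unwind the right-hand side using the definition of the reduced instrument: $\rmtr\sqbrac{\iscript^1_x(\rho _1)}=\tfrac{1}{n_2}\rmtr\sqbrac{\iscript _x(\rho _1\otimes 1_2)}$, the outer trace having absorbed $\rmtr _2$. Since $\iscript _x$ is an operation, hence linear on partial states, and since $\rho _1\otimes\tfrac{1_2}{n_2}\in\sscript (H_1\otimes H_2)$ is a genuine state, I would rewrite $\rho _1\otimes 1_2=n_2\paren{\rho _1\otimes\tfrac{1_2}{n_2}}$ and pull the scalar out, obtaining $\rmtr\sqbrac{\iscript^1_x(\rho _1)}=\rmtr\sqbrac{\iscript _x\paren{\rho _1\otimes\tfrac{1_2}{n_2}}}$. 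Now the defining relation \eqref{eq21} for $\iscripthat$, applied to the state $\rho _1\otimes\tfrac{1_2}{n_2}$ (with $X=\brac{x}$), gives $\rmtr\sqbrac{\iscript _x\paren{\rho _1\otimes\tfrac{1_2}{n_2}}}=\rmtr\sqbrac{\paren{\rho _1\otimes\tfrac{1_2}{n_2}}\iscripthat _x}=\tfrac{1}{n_2}\rmtr\sqbrac{(\rho _1\otimes 1_2)\iscripthat _x}$.

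The final step is the partial-trace identity $\rmtr\sqbrac{(\rho _1\otimes 1_2)M}=\rmtr _1\sqbrac{\rho _1\rmtr _2(M)}$ for $M\in\lscript (H_1\otimes H_2)$ — the same identity used in the proof of Lemma~\ref{lem44} — applied with $M=\iscripthat _x$. This turns the expression into $\tfrac{1}{n_2}\rmtr _1\sqbrac{\rho _1\rmtr _2(\iscripthat _x)}=\rmtr _1\sqbrac{\rho _1(\iscripthat _x)^1}$ by the definition of the reduced effect. Chaining the equalities gives $\rmtr\sqbrac{\rho _1(\iscripthat _x)^1}=\rmtr\sqbrac{\iscript^1_x(\rho _1)}$ for all $\rho _1\in\sscript (H_1)$ and all $x\in\Omega _\iscript$, so by uniqueness of the observable measured by $\iscript^1$ we conclude $(\iscript^1)^\wedge =(\iscripthat\,)^1$. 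The identity $(\iscript^2)^\wedge =(\iscripthat\,)^2$ follows by the same argument with the two factors interchanged and $n_2$ replaced by $n_1$.

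I do not expect a real obstacle here; the only point needing care is the normalization bookkeeping — because $\iscript _x$ acts on partial states, one must scale $\rho _1\otimes 1_2$ down to the state $\rho _1\otimes 1_2/n_2$ before invoking the defining property of $\iscripthat$, and then track the compensating factors of $n_2$ (one from the definition of $\iscript^1$, one from the definition of the reduced effect) so that they cancel correctly.
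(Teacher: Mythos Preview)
Your proof is correct and follows essentially the same chain of equalities as the paper, just traversed in the opposite direction. If anything, your extra care in scaling $\rho _1\otimes 1_2$ down to the genuine state $\rho _1\otimes 1_2/n_2$ before invoking the defining relation for $\iscripthat$ makes explicit a linearity step that the paper leaves implicit.
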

\begin{proof}
For all $\rho _1\in\sscript (H_1)$ we have that
\begin{align*}
\rmtr\sqbrac{\rho _1(\iscripthat\,)_x^1}&=\tfrac{1}{n_2}\,\rmtr\sqbrac{\rho _1\rmtr _2(\iscripthat\,)_x}
  =\tfrac{1}{n_2}\,\rmtr\sqbrac{(\rho _1\otimes 1_2)\iscripthat _x}\\
  &=\tfrac{1}{n_2}\,\rmtr\sqbrac{\iscript _x(\rho _1\otimes 1_2)}=\rmtr\sqbrac{\iscript _x^1(\rho _1)}
  =\rmtr\sqbrac{\rho _1(\iscript ^1)_x^\wedge}
\end{align*}
We conclude that $(\iscript ^1)^\wedge =(\iscripthat\,)^1$ and similarly, $(\iscript ^2)^\wedge =(\iscripthat\,)^2$.
\end{proof}

For $\iscript\in\rmin (H_1)$ we define the $\iscript$-\textit{random measure} on $\Omega _\iscript$ by
\begin{equation*}
\mu ^\iscript (X)=\tfrac{1}{n_1}\,\rmtr\sqbrac{\iscript _X(1_1)}
\end{equation*}
For $\iscript _1\in\rmin (H_1)$, $\iscript _2\in\rmin (H_2)$ we define $\jscript =\iscript _1\otimes\iscript _2\in\rmin (H_1\otimes H_2)$ with outcome-space $\Omega _{\iscript _1}\times\Omega _{\iscript _2}$ by $\jscript _{(x,y)}=\iscript _{1,x}\otimes\iscript _{2,y}$. It is easy to check that $\jscript$ is indeed an instrument.

\begin{thm}    % Theorem 4.9
\label{thm49}
Let $\jscript =\iscript _1\otimes\iscript _2\in\rmin (H_1\otimes H_2)$.
{\rm{(a)}}\enspace $\jscript _{(x,y)}^1(\rho _1)=\mu ^{\iscript _2}(y)\iscript _{1,x}(\rho _1)$ for all $\rho _1\in\sscript (H_1)$ and
$\jscript _{(x,y)}^2(\rho _2)=\mu ^{\iscript _1}(x)\iscript _{2,y}(\rho _2)$ for all $\rho _2\in\sscript (H_2)$.
{\rm{(b)}}\enspace $(\iscript _1\otimes\iscript _2)^\wedge =\iscripthat _1\otimes\iscripthat _2$.
\end{thm}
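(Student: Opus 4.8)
\textbf{Proof plan for Theorem~\ref{thm49}.}
The plan is to verify each claim by unwinding the definitions of the reduced instrument, the random measure, the tensor product of instruments, and the instrument measured by a model, and then reducing everything to traces of operators so that the identities become straightforward computations. For part (a), I would start from $\jscript_{(x,y)}^1(\rho_1)=\tfrac{1}{n_2}\,\rmtr_2\sqbrac{\jscript_{(x,y)}(\rho_1\otimes 1_2)}$ and substitute $\jscript_{(x,y)}=\iscript_{1,x}\otimes\iscript_{2,y}$, so that $\jscript_{(x,y)}(\rho_1\otimes 1_2)=\iscript_{1,x}(\rho_1)\otimes\iscript_{2,y}(1_2)$. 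Taking $\rmtr_2$ of a factorized operator pulls out the scalar $\rmtr\sqbrac{\iscript_{2,y}(1_2)}$, leaving $\tfrac{1}{n_2}\,\rmtr\sqbrac{\iscript_{2,y}(1_2)}\,\iscript_{1,x}(\rho_1)$; recognizing $\tfrac{1}{n_2}\,\rmtr\sqbrac{\iscript_{2,y}(1_2)}=\mu^{\iscript_2}(y)$ from the definition of the random measure gives the first identity, and the second is symmetric.

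For part (b), the cleanest route is to use the characterizing property of $\widehat{\ }$: the observable $(\iscript_1\otimes\iscript_2)^\wedge$ is the unique observable with $\rmtr\sqbrac{\rho\,(\iscript_1\otimes\iscript_2)^\wedge_{(x,y)}}=\rmtr\sqbrac{(\iscript_1\otimes\iscript_2)_{(x,y)}(\rho)}$ for all states $\rho$ on $H_1\otimes H_2$. So it suffices to check that $\iscripthat_1\otimes\iscripthat_2$ satisfies this equation. I would first verify it on product states $\rho=\rho_1\otimes\rho_2$: the right side is $\rmtr\sqbrac{\iscript_{1,x}(\rho_1)\otimes\iscript_{2,y}(\rho_2)}=\rmtr\sqbrac{\iscript_{1,x}(\rho_1)}\,\rmtr\sqbrac{\iscript_{2,y}(\rho_2)}=\rmtr(\rho_1\iscripthat_{1,x})\,\rmtr(\rho_2\iscripthat_{2,y})$, while the left side with the candidate is $\rmtr\sqbrac{(\rho_1\otimes\rho_2)(\iscripthat_{1,x}\otimes\iscripthat_{2,y})}$, which equals the same product. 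Since both sides are linear in $\rho$ and the product states span the operators on $H_1\otimes H_2$ (indeed $\lscript(H_1\otimes H_2)=\lscript(H_1)\otimes\lscript(H_2)$), the equality extends to all $\rho$, and uniqueness of the measured observable forces $(\iscript_1\otimes\iscript_2)^\wedge=\iscripthat_1\otimes\iscripthat_2$.

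I do not anticipate a serious obstacle here — the argument is essentially bookkeeping with partial traces and the defining properties already established in Section~2. The one point requiring a little care is the linearity/spanning step in part (b): one must note that an instrument, being built from completely positive maps with a Kraus form, extends linearly to all of $\lscript(H)$, so that checking \eqref{eq21}-type identities on the spanning set of product states is genuinely sufficient; alternatively one can avoid this by checking the identity directly on an arbitrary $\rho$ using that $\rmtr_2$ and the trace are linear, which makes the computation $\rmtr\sqbrac{\iscript_{1,x}\otimes\iscript_{2,y}(\rho)}$ reduce (by writing the action of $\iscript_{2,y}$ via its Kraus operators) to $\rmtr\sqbrac{\rho(\iscripthat_{1,x}\otimes\iscripthat_{2,y})}$ in one line. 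Either way the proof is short.
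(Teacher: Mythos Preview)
Your proposal is correct and follows essentially the same route as the paper's proof: part~(a) is the identical partial-trace computation, and part~(b) is verified on product states $\rho_1\otimes\rho_2$ exactly as the paper does. The only difference is that you are more explicit about the linearity/spanning step needed to pass from product states to general $\rho$, which the paper simply absorbs into ``and the result follows.''
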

\begin{proof}
(a)\enspace For all $\rho _1\in\sscript (H_1)$ we have that
\begin{align*}
\jscript _{(x,y)}^1(\rho _1)&=\tfrac{1}{n_2}\,\rmtr _2\sqbrac{\jscript _{(x,y)}(\rho _1\otimes 1_2)}
   =\tfrac{1}{n_2}\,\rmtr _2\sqbrac{\iscript _{1,x}\otimes\iscript _{2,y}(\rho\otimes 1_2)}\\
   &=\tfrac{1}{n_2}\,\rmtr _2\sqbrac{\iscript _{1,x}(\rho _1)\otimes\iscript _{2,y}(1_2)}
   =\tfrac{1}{n_2}\,\rmtr\sqbrac{\iscript _{2,y}(1_2)}\iscript _{1,x}(\rho _1)\\
   &=\mu ^{\iscript _2}(y)\iscript _{1,x}(\rho _1)
\end{align*}
Similarly, $\jscript _{(x,y)}^2(\rho _2)=\mu ^{\iscript _1}(x)\iscript _{2,y}(\rho _2)$ for all $\rho _2\in\sscript (H_2)$.
(b)\enspace For all $\rho _1\in\sscript (H_1)$, $\rho _2\in\sscript (H_2)$ we have that
\begin{align*}
\rmtr\!\sqbrac{\rho _1\otimes\rho _2(\iscript _1\otimes \iscript _2)_{(x,y)}^\wedge}
  &=\rmtr\!\sqbrac{\iscript _{1,x}\otimes\iscript _{2,y}(\rho _1\otimes\rho _2)}
  =\rmtr\!\sqbrac{\iscript _{1,x}(\rho _1)\otimes\iscript _{2,y}(\rho _2)}\\
  &=\rmtr\!\sqbrac{\iscript _{1,x}(\rho _1)}\rmtr\sqbrac{\iscript _{2,y}(\rho _2)}
  =\rmtr\sqbrac{\rho _1\iscripthat _{1,x}}\rmtr\sqbrac{\rho _2\iscripthat _{2,y}}\\
  &=\rmtr\sqbrac{\rho _1\otimes\rho _2(\iscripthat _{1,x}\otimes\iscripthat _{2,y})}
\end{align*}
and the result follows.
\end{proof}

A \textit{Kraus instrument} is an instrument of the form $\iscript _x(\rho )=S_x\rho S_x^*$ where $\sum\limits _xS_x^*S_x=1$,
$x\in\Omega _\iscript$. The operators $S_x$ are called \textit{Kraus operators} for $\iscript$ \cite{kra83}.

\begin{lem}    % Lemma 4.10
\label{lem410}
Let $\iscript _1\in\rmin (H_1)$, $\iscript _2\in\rmin (H_2)$ be Kraus instruments with operators $S_{1,x}$, $S_{2,y}$, respectively.
{\rm{(a)}}\enspace $\jscript =\iscript _1\otimes\iscript _2$ is a Kraus instrument with operators $S_{1,x}\otimes S_{2,y}$.
{\rm{(b)}}\enspace $\jscript ^1,\jscript ^2$ are Kraus instruments with operators
\begin{align*}
T_{(x,y)}&=\sqbrac{\tfrac{1}{n_2}\,\rmtr (S_{2,y}S_{2,y}^*)}^{1/2}S_{1,x}\\
R_{(x,y)}&=\sqbrac{\tfrac{1}{n_1}\,\rmtr (S_{1,x}S_{1,x}^*)}^{1/2}S_{2,y}
\end{align*}
\end{lem}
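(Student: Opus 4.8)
The plan is a direct computation with Kraus operators, resting on the mixed-product law for operator tensor products and cyclicity of the trace; no part of the argument is deep.

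For (a), recall that by the definition of $\jscript=\iscript _1\otimes\iscript _2$ given just before the lemma, $\jscript _{(x,y)}=\iscript _{1,x}\otimes\iscript _{2,y}$ sends $\rho _1\otimes\rho _2$ to $\iscript _{1,x}(\rho _1)\otimes\iscript _{2,y}(\rho _2)$, exactly as used in the proof of Theorem~\ref{thm49}. Substituting the Kraus forms $\iscript _{1,x}(\rho _1)=S_{1,x}\rho _1S_{1,x}^*$ and $\iscript _{2,y}(\rho _2)=S_{2,y}\rho _2S_{2,y}^*$ and using $(S\otimes T)(U\otimes V)=(SU)\otimes(TV)$ together with $(S\otimes T)^*=S^*\otimes T^*$ gives $\jscript _{(x,y)}(\rho _1\otimes\rho _2)=(S_{1,x}\otimes S_{2,y})(\rho _1\otimes\rho _2)(S_{1,x}\otimes S_{2,y})^*$; since product states span $\lscript (H_1\otimes H_2)$ and both sides are linear, this holds for every input. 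The normalization $\sum _{x,y}(S_{1,x}\otimes S_{2,y})^*(S_{1,x}\otimes S_{2,y})=\paren{\sum _xS_{1,x}^*S_{1,x}}\otimes\paren{\sum _yS_{2,y}^*S_{2,y}}=1_1\otimes 1_2$ then confirms that $\jscript$ is a Kraus instrument with operators $S_{1,x}\otimes S_{2,y}$.

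For (b), apply the definition of the reduced instrument to the Kraus form from (a): for $\rho _1\in\sscript (H_1)$,
$\jscript _{(x,y)}^1(\rho _1)=\tfrac1{n_2}\rmtr_2\sqbrac{(S_{1,x}\otimes S_{2,y})(\rho _1\otimes 1_2)(S_{1,x}\otimes S_{2,y})^*}=\tfrac1{n_2}\rmtr_2\sqbrac{(S_{1,x}\rho _1S_{1,x}^*)\otimes(S_{2,y}S_{2,y}^*)}=\tfrac1{n_2}\rmtr (S_{2,y}S_{2,y}^*)\,S_{1,x}\rho _1S_{1,x}^*$, the last step by $\rmtr_2(C\otimes D)=\rmtr (D)C$. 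The prefactor $\tfrac1{n_2}\rmtr (S_{2,y}S_{2,y}^*)$ is nonnegative, so distributing its square root to each side shows $\jscript _{(x,y)}^1(\rho _1)=T_{(x,y)}\rho _1T_{(x,y)}^*$ with $T_{(x,y)}=\sqbrac{\tfrac1{n_2}\rmtr (S_{2,y}S_{2,y}^*)}^{1/2}S_{1,x}$, the claimed operator. That $\jscript ^1$ is a genuine instrument follows from $\sum _{x,y}T_{(x,y)}^*T_{(x,y)}=\tfrac1{n_2}\paren{\sum _y\rmtr (S_{2,y}S_{2,y}^*)}\paren{\sum _xS_{1,x}^*S_{1,x}}=\tfrac1{n_2}\rmtr\paren{\sum _yS_{2,y}^*S_{2,y}}\,1_1=\tfrac1{n_2}\rmtr (1_2)\,1_1=1_1$, using cyclicity of the trace and $\rmtr (1_2)=n_2$. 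The statement for $\jscript ^2$ and $R_{(x,y)}$ is obtained verbatim by interchanging the two tensor factors and the roles of $n_1,n_2$.

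I do not anticipate a real obstacle; the only points worth flagging are that the scalar appearing in $\jscript _{(x,y)}^1(\rho _1)$ is nonnegative (so splitting it as $T_{(x,y)}$ and $T_{(x,y)}^*$ is legitimate), and that the dimensional constant $1/n_2$ built into the definition of the reduced instrument is not cosmetic: it is precisely what the identity $\rmtr (1_2)=n_2$ cancels in order to make $\jscript ^1$ normalized.
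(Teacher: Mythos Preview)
Your proof is correct and follows essentially the same route as the paper: for (a) you verify the Kraus form on product states via the mixed-product law, and for (b) you compute the partial trace directly to obtain $\tfrac{1}{n_2}\rmtr(S_{2,y}S_{2,y}^*)S_{1,x}\rho_1 S_{1,x}^*$, just as the paper does (the paper cites Theorem~\ref{thm49}(a) for the first step of (b), whereas you unfold the partial trace by hand, but the computation is the same). Your added checks of the normalization conditions $\sum_{x,y}(S_{1,x}\otimes S_{2,y})^*(S_{1,x}\otimes S_{2,y})=1$ and $\sum_{x,y}T_{(x,y)}^*T_{(x,y)}=1_1$ are not in the paper's proof but are welcome completeness details.
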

\begin{proof}
(a)\enspace For all $\rho _1\in\sscript (H_1)$, $\rho _2\in\sscript (H_2)$ we have that
\begin{align*}
\jscript _{(x,y)}(\rho _1\otimes\rho _2)&=(\iscript _{1,x}\times\iscript _{2,y})(\rho _1\otimes\rho _2)
  =\iscript _{1,x}(\rho _1)\otimes\iscript _{2,y}(\rho _2)\\
  &=S_{1,x}\rho _1S_{1,x}^*\otimes S_{2,y}\rho _2S_{2,y}^*\\
  &=S_{1,x}\otimes S_{2,y}(\rho _1\otimes\rho _2)S_{1,x}^*\otimes S_{2,y}^*
\end{align*}
and the result follows. (b)\enspace For $\rho\in\sscript (H_1)$ we obtain
\begin{equation*}
\jscript _{(x,y)}^1(\rho _1)=\tfrac{1}{n_2}\,\rmtr\sqbrac{\iscript _{2,y}(1_2)}\iscript _{1,x}(\rho _1)
  =\tfrac{1}{n_2}\,\rmtr (S_{2,y}S_{2,y}^*)S_{1,x}\rho _1S_{1,x}^*
\end{equation*}
This can be considered to be a Kraus instrument with operators $T_{(x,y)}$ given above. The result for $\jscript ^2$ is similar.
\end{proof}

Notice that a L\"uders instrument defined by $\lscript _x^A(\rho _1)=A_x^{1/2}\rho _1A_x^{1/2}$ for all $\rho _1\in\sscript (H_1)$ is a particular case of a Kraus instrument with operators $A_x^{1/2}$ \cite{lud51}.

\begin{cor}    % Corollary 4.11
\label{cor41}
Let $A\in\oscript (H_1)$, $B\in\oscript (H_2)$.
{\rm{(a)}}\enspace $\lscript _x^A\otimes\lscript _y^B=\lscript _{(x,y)}^{A\otimes B}$.
{\rm{(b)}}\enspace $(\lscript _x^A\otimes\lscript _y^B)^1=\lscript _{(x,y)}^C$ where $C=\tfrac{1}{n_2}\,\rmtr (B_y)A_x$ and
$(\lscript _x^A\otimes\lscript _y^B)^2=\lscript _{(x,y)}^D$ where $D=\tfrac{1}{n_2}\,\rmtr (A_x)B_y$.
\end{cor}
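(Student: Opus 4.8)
The plan is to prove Corollary~\ref{cor41} directly from Lemma~\ref{lem410}, which handles tensor products and reductions of Kraus instruments, together with the remark immediately preceding the corollary that a L\"uders instrument $\lscript^A$ is the Kraus instrument with operators $A_x^{1/2}$. For part (a), I would apply Lemma~\ref{lem410}(a) with $\iscript_1 = \lscript^A$ and $\iscript_2 = \lscript^B$, whose Kraus operators are $S_{1,x} = A_x^{1/2}$ and $S_{2,y} = B_y^{1/2}$; the lemma says $\lscript^A \otimes \lscript^B$ is the Kraus instrument with operators $A_x^{1/2} \otimes B_y^{1/2}$. Since $A_x^{1/2} \otimes B_y^{1/2} = (A_x \otimes B_y)^{1/2}$ (the square root of a tensor product of positive operators is the tensor product of the square roots), this is exactly the L\"uders instrument $\lscript^{A\otimes B}$, giving $(\lscript^A \otimes \lscript^B)_{(x,y)}(\rho) = (A_x \otimes B_y)^{1/2}\rho(A_x\otimes B_y)^{1/2} = \lscript^{A\otimes B}_{(x,y)}(\rho)$.

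For part (b), I would apply Lemma~\ref{lem410}(b) to the same Kraus instruments. With $S_{1,x} = A_x^{1/2}$ we have $S_{1,x}S_{1,x}^* = A_x^{1/2}A_x^{1/2} = A_x$, and similarly $S_{2,y}S_{2,y}^* = B_y$. So the Kraus operators for $(\lscript^A\otimes\lscript^B)^1$ become $T_{(x,y)} = \big[\tfrac{1}{n_2}\rmtr(B_y)\big]^{1/2} A_x^{1/2} = \big[\tfrac{1}{n_2}\rmtr(B_y) A_x\big]^{1/2}$, which is precisely $C^{1/2}$ for $C = \tfrac{1}{n_2}\rmtr(B_y)A_x$ (here $\tfrac{1}{n_2}\rmtr(B_y)$ is a nonnegative scalar, so it commutes with the square root). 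Hence $(\lscript^A\otimes\lscript^B)^1$ is the Kraus instrument with operators $C^{1/2}$, i.e. the L\"uders instrument $\lscript^C$. The computation for $(\lscript^A\otimes\lscript^B)^2$ is symmetric, using $R_{(x,y)} = \big[\tfrac{1}{n_1}\rmtr(A_x)\big]^{1/2}B_y^{1/2} = D^{1/2}$ with $D = \tfrac{1}{n_1}\rmtr(A_x)B_y$. (I note a cosmetic mismatch in the statement, where $D$ is written with $\tfrac{1}{n_2}$; the reduction to system~2 should carry $\tfrac{1}{n_1}$, and I would write it that way in the proof.)

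The only genuinely mathematical point, beyond invoking the lemma, is the identity $(A_x \otimes B_y)^{1/2} = A_x^{1/2}\otimes B_y^{1/2}$ and the fact that $(\lambda a)^{1/2} = \lambda^{1/2} a^{1/2}$ for a nonnegative scalar $\lambda$ and a positive operator $a$. Both follow immediately from uniqueness of the positive square root: $A_x^{1/2}\otimes B_y^{1/2}$ is positive and its square is $A_x \otimes B_y$, so it equals $(A_x\otimes B_y)^{1/2}$; likewise $\lambda^{1/2}a^{1/2}$ is positive with square $\lambda a$. I do not anticipate any real obstacle here — the corollary is essentially a specialization of Lemma~\ref{lem410} together with these square-root bookkeeping facts, so the proof is short and the main care needed is just tracking the scalar factors $\tfrac{1}{n_1}$ versus $\tfrac{1}{n_2}$ correctly in the two reductions.
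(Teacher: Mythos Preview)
Your proposal is correct and follows exactly the approach the paper intends: the corollary is stated without proof in the paper, immediately after Lemma~\ref{lem410} and the remark that L\"uders instruments are Kraus with operators $A_x^{1/2}$, so it is meant to be the specialization you give. Your observation that $D$ should carry the factor $\tfrac{1}{n_1}$ rather than $\tfrac{1}{n_2}$ is also correct; this is a typo in the statement.
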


We say that a Kraus instrument $\iscript\in\rmin (H_1\otimes H_2)$ with operators $R_x$ is \textit{factorized} if $R_x=S_x\otimes T_x$ for all
$x\in\Omega _\iscript$. We conjecture that if $\iscript\in\rmin (H_1\otimes H_2)$ is Kraus, then $\iscript ^1$ and $\iscript ^2$ need not be Kraus. However, we do have the following result.

\begin{lem}    % Lemma 4.12
\label{lem412}
If $\iscript\in\rmin (H_1\otimes H_2)$ is Kraus and factorized, then $\iscript ^1$ and $\iscript ^2$ are Kraus.
\end{lem}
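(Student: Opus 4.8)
The plan is to reduce the claim to the single-index analogue of the computation in Lemma~\ref{lem410}(b). Write the Kraus, factorized instrument as $\iscript_x(\rho) = R_x\rho R_x^*$ with $R_x = S_x\otimes T_x$, where $S_x\in\lscript(H_1)$ and $T_x\in\lscript(H_2)$; then the instrument normalization $\sum_x R_x^*R_x = 1_1\otimes 1_2$ reads $\sum_x S_x^*S_x\otimes T_x^*T_x = 1_1\otimes 1_2$. First I would compute $\iscript^1$ straight from its definition, using $\rmtr_2(P\otimes Q) = \rmtr(Q)P$:
\[
\iscript_x^1(\rho_1) = \tfrac{1}{n_2}\,\rmtr_2\sqbrac{(S_x\otimes T_x)(\rho_1\otimes 1_2)(S_x^*\otimes T_x^*)} = \tfrac{1}{n_2}\,\rmtr_2\sqbrac{S_x\rho_1 S_x^*\otimes T_xT_x^*} = \tfrac{1}{n_2}\,\rmtr(T_xT_x^*)\,S_x\rho_1 S_x^*
\]
for all $\rho_1\in\sscript(H_1)$. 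Setting $\widetilde S_x = \sqbrac{\tfrac{1}{n_2}\,\rmtr(T_xT_x^*)}^{1/2}S_x$ we get $\iscript_x^1(\rho_1) = \widetilde S_x\rho_1\widetilde S_x^*$, so the $\widetilde S_x$ are the natural candidates for Kraus operators of $\iscript^1$.

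The only point requiring care is the normalization $\sum_x\widetilde S_x^*\widetilde S_x = 1_1$, and this is exactly where the hypothesis is used. Applying $\rmtr_2$ to $\sum_x S_x^*S_x\otimes T_x^*T_x = 1_1\otimes 1_2$ gives $\sum_x\rmtr(T_x^*T_x)\,S_x^*S_x = \rmtr_2(1_1\otimes 1_2) = n_2\cdot 1_1$; since $\rmtr(T_x^*T_x) = \rmtr(T_xT_x^*)$, dividing by $n_2$ yields $\sum_x\widetilde S_x^*\widetilde S_x = 1_1$, so $\iscript^1\in\rmin(H_1)$ is a Kraus instrument. The argument for $\iscript^2$ is entirely symmetric: reducing on system~1 gives $\iscript_x^2(\rho_2) = \tfrac{1}{n_1}\,\rmtr(S_xS_x^*)\,T_x\rho_2 T_x^*$, and taking $\rmtr_1$ of the same normalization identity shows the operators $\widetilde T_x = \sqbrac{\tfrac{1}{n_1}\,\rmtr(S_xS_x^*)}^{1/2}T_x$ satisfy $\sum_x\widetilde T_x^*\widetilde T_x = 1_2$.

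I do not expect a real obstacle. The content is just the two displayed computations together with the trace identity $\rmtr(T^*T) = \rmtr(TT^*)$; the role of the factorization hypothesis $R_x = S_x\otimes T_x$ is precisely that it makes $\rmtr_2\sqbrac{\iscript_x(\rho_1\otimes 1_2)}$ collapse to a scalar multiple of $S_x\rho_1 S_x^*$, so that a single Kraus operator per outcome suffices — which is exactly what fails for a general (non-factorized) Kraus instrument, consistent with the authors' conjecture that $\iscript^1,\iscript^2$ need not then be Kraus.
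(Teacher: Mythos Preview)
Your proof is correct and follows essentially the same computation as the paper: both expand $\iscript_x^1(\rho_1)=\tfrac{1}{n_2}\rmtr_2[(S_x\otimes T_x)(\rho_1\otimes 1_2)(S_x^*\otimes T_x^*)]$ to obtain $\tfrac{1}{n_2}\rmtr(T_xT_x^*)\,S_x\rho_1S_x^*$ and read off the Kraus operators $\widetilde S_x$. The only difference is that you additionally verify $\sum_x\widetilde S_x^*\widetilde S_x=1_1$ directly via $\rmtr_2$ of the normalization identity, whereas the paper simply asserts $\iscript^1$ is Kraus with those operators; since $\iscript^1$ is already known to be an instrument by construction, trace preservation forces the normalization automatically, so your extra check is a welcome but inessential addition.
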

\begin{proof}
If the operators $R_x$ for $\iscript$ satisfy $R_x=S_x\otimes T_x$, then for all $\rho _1\in\sscript (H_1)$ we have that
\begin{align*}
\iscript _x^1(\rho _1)&=\tfrac{1}{n_2}\,\rmtr _2\sqbrac{\iscript _x(\rho _1\otimes 1_2)}
  =\tfrac{1}{n_2}\,\rmtr _2\sqbrac{R_x(\rho _1\otimes 1_2)R_x^*}\\
  &=\tfrac{1}{n_2}\,\rmtr _2\sqbrac{S_x\otimes T_x(\rho _1\otimes 1_2)S_x^*\otimes T_x^*}\\
  &=\tfrac{1}{n_2}\,\rmtr _2(S_x\rho _1S_x^*\otimes T_xT_x^*)=\tfrac{1}{n_2}\,\rmtr (T_xT_x^*)S_x\rho _1S_x^*
\end{align*}
Hence, $\iscript ^1$ is Kraus with operators $\sqbrac{\tfrac{1}{n_2}\,\rmtr (T_xT_x^*)}^{1/2}S_x$. Similarly, $\iscript ^2$ is Kraus with operators
$\sqbrac{\tfrac{1}{n_1}\,\rmtr (S_xS_x^*)}^{1/2}T_x$.
\end{proof}

We do not know if the converse of Lemma~\ref{lem412} holds. We now consider trivial instruments (see Example~3).

\begin{lem}    % Lemma 4.13
\label{lem413}
Let $\iscript _1\in\rmin (H_1)$, $\iscript _2\in\rmin (H_2)$ be trivial instruments with
\begin{equation*}
\iscript _{1,x}(\rho _1)=\rmtr (\rho _1A_x)\alpha ,\quad\iscript _{2,y}(\rho _2)=\rmtr (\rho _2B_y)\beta
\end{equation*}
{\rm{(a)}}\enspace $\iscript _1\otimes\iscript _2\in\rmin (H_1\otimes H_2)$ is trivial with observable $A\otimes B$ and state
$\alpha\otimes\beta$
{\rm{(b)}}\enspace $(\iscript _1\otimes\iscript _2)^1$, $(\iscript _1\otimes\iscript _2)^2$ are trivial with observables $\mu ^B(y)A_x$,
$\mu ^A(x)B_y$ and states $\alpha ,\beta$, respectively.
\end{lem}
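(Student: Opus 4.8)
The plan is to verify both parts by direct computation, using the definition of the tensor product instrument $\iscript_1 \otimes \iscript_2$ together with the formulas for reduced instruments and random measures established in Theorem~\ref{thm49} and Lemma~\ref{lem410}. For part~(a), I would compute $(\iscript_1 \otimes \iscript_2)_{(x,y)}(\rho_1 \otimes \rho_2)$ on a product state and then extend by linearity. Since $\iscript_{1,x}(\rho_1) = \rmtr(\rho_1 A_x)\alpha$ and $\iscript_{2,y}(\rho_2) = \rmtr(\rho_2 B_y)\beta$, we get
\begin{equation*}
(\iscript_1 \otimes \iscript_2)_{(x,y)}(\rho_1 \otimes \rho_2) = \iscript_{1,x}(\rho_1) \otimes \iscript_{2,y}(\rho_2) = \rmtr(\rho_1 A_x)\rmtr(\rho_2 B_y)\,\alpha \otimes \beta = \rmtr\sqbrac{(\rho_1 \otimes \rho_2)(A_x \otimes B_y)}\,\alpha \otimes \beta.
\end{equation*}
Because every state on $H_1 \otimes H_2$ is a linear combination of product states and both sides are linear in the input, this identity extends to $(\iscript_1 \otimes \iscript_2)_{(x,y)}(\sigma) = \rmtr\sqbrac{\sigma(A_x \otimes B_y)}\,\alpha \otimes \beta$ for all $\sigma \in \sscript(H_1 \otimes H_2)$, which is exactly the statement that $\iscript_1 \otimes \iscript_2$ is trivial with observable $A \otimes B$ and state $\alpha \otimes \beta$.

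For part~(b), I would apply Theorem~\ref{thm49}(a), which gives $(\iscript_1 \otimes \iscript_2)_{(x,y)}^1(\rho_1) = \mu^{\iscript_2}(y)\,\iscript_{1,x}(\rho_1)$. Now $\iscript_{1,x}(\rho_1) = \rmtr(\rho_1 A_x)\alpha$, and I would compute the random measure $\mu^{\iscript_2}(y) = \tfrac{1}{n_2}\rmtr\sqbrac{\iscript_{2,y}(1_2)} = \tfrac{1}{n_2}\rmtr\sqbrac{\rmtr(1_2 B_y)\beta} = \tfrac{1}{n_2}\rmtr(B_y) = \mu^{B}(y)$ using that $\rmtr(\beta) = 1$. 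Hence
\begin{equation*}
(\iscript_1 \otimes \iscript_2)_{(x,y)}^1(\rho_1) = \mu^{B}(y)\rmtr(\rho_1 A_x)\alpha = \rmtr\sqbrac{\rho_1\bigl(\mu^{B}(y)A_x\bigr)}\,\alpha,
\end{equation*}
which shows $(\iscript_1 \otimes \iscript_2)^1$ is trivial with observable having effects $\mu^{B}(y)A_x$ and state $\alpha$. (One should note in passing that $\brac{\mu^B(y)A_x}$ is a legitimate observable on $\Omega_{A} \times \Omega_{B}$, since $\sum_{x,y}\mu^B(y)A_x = \bigl(\sum_y \mu^B(y)\bigr)\bigl(\sum_x A_x\bigr) = 1_1$.) The computation for $(\iscript_1 \otimes \iscript_2)^2$ is symmetric, using $\mu^{\iscript_1}(x) = \mu^{A}(x)$ and state $\beta$.

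There is no real obstacle here; the only points requiring a moment's care are the linear-extension argument in part~(a)—making sure the trivial-instrument identity, initially verified only on product states, genuinely propagates to all states because both sides of the equation are affine (indeed linear) in the input partial state—and correctly identifying $\mu^{\iscript_i}$ with $\mu^{A}$ (resp.\ $\mu^{B}$) via the normalization $\rmtr(\alpha) = \rmtr(\beta) = 1$. Both are routine, so the proof is essentially a short chain of substitutions into the definitions, and I would present it in that compact form.
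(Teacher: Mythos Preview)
Your proposal is correct and follows essentially the same route as the paper: verify (a) on product states and extend by linearity, and for (b) reduce to identifying $\mu^{\iscript_2}(y)=\mu^B(y)$ (resp.\ $\mu^{\iscript_1}(x)=\mu^A(x)$). The only cosmetic difference is that for (b) you invoke Theorem~\ref{thm49}(a) to obtain $(\iscript_1\otimes\iscript_2)^1_{(x,y)}(\rho_1)=\mu^{\iscript_2}(y)\iscript_{1,x}(\rho_1)$, whereas the paper re-derives that formula inline from the definition of the reduced instrument; your mention of Lemma~\ref{lem410} is unnecessary since nothing about Kraus instruments is used.
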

\begin{proof}
(a)\enspace For all $(x,y)\in\Omega _{\iscript _1}\times\Omega _{\iscript _2}$, $\rho _1\in\sscript (H_1)$, $\rho _2\in\sscript (H_2)$ we have that
\begin{align*}
(\iscript _1\otimes\iscript _2)_{(x,y)}(\rho _1\otimes\rho _2)&=\iscript _{1,x}(\rho _1)\otimes\iscript _{2,y}(\rho _2)
   =\rmtr (\rho _1A_x)\alpha\otimes\rmtr (\rho _2B_y)\beta\\
   &=\rmtr (\rho _1A_x)\rmtr (\rho _2B_y)\alpha\otimes\beta =\rmtr (\rho _1A_x\otimes\rho _2B_y)\alpha\otimes\beta\\
   &=\rmtr (\rho _1\otimes\rho _2A\otimes B_{(x,y)})\alpha\otimes\beta
\end{align*}
The result now follows. (b)\enspace This follows from
\begin{align*}
(\iscript _1\otimes\iscript _2)_{(x,y)}^1(\rho _1)&=\tfrac{1}{n_2}\,\rmtr _2\sqbrac{\iscript _{1,x}\otimes\iscript _{2,y}(\rho _1\otimes 1_2)}
  =\tfrac{1}{n_2}\,\rmtr _2\sqbrac{\iscript _{1,x}(\rho _1)\otimes\iscript _{2,y}(1_2)}\\
  &=\tfrac{1}{n_2}\,\rmtr\sqbrac{\iscript _{2,y}(1_2)}\iscript _{1,x}(\rho _1)
  =\tfrac{1}{n_2}\,\rmtr\sqbrac{\rmtr (1_2B_y)\beta}\rmtr (\rho _1A_x)\alpha\\
  &=\tfrac{1}{n_2}\,\rmtr (B_y)\rmtr (\rho _1A_x)\alpha =\rmtr\sqbrac{\rho _1\mu ^B(y)A_x}\alpha
\end{align*}
and similarly
\begin{equation*}
(\iscript _1\otimes\iscript _2)_{(x,y)}^2(\rho _2)=\rmtr\sqbrac{\rho _2\mu ^A(x)B_y}\beta\qedhere
\end{equation*}
\end{proof}

\begin{lem}    % Lemma 4.14
\label{lem414}
Let $\iscript\in\rmin (H_1\otimes H_2)$ be trivial with $\iscript _x(\rho )=\rmtr (\rho A_x)\alpha$.
{\rm{(a)}}\enspace $\iscript ^1$, $\iscript ^2$ are trivial with observables $A_x^1$, $A_x^2$ and states $\rmtr _2(\alpha )$, $\rmtr _1(\alpha )$, respectively.
{\rm{(b)}}\enspace Letting $\jscript =\iscript ^1\otimes\iscript ^2$ we have that $\jscript$ is trivial with observable $A^1\otimes A^2$ and state
$\rmtr _2(\alpha )\otimes\rmtr _1(\alpha )$. Moreover, $\jscript _{(x,y)}^1=\iscript _x^1$ and $\jscript _{(x,y)}^2=\iscript _y^2$ for all
$(x,y)\in\Omega _\jscript$.
\end{lem}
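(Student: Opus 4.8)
The plan is to prove (a) by unwinding the definition of the reduced instrument, and then to deduce (b) by feeding the two trivial instruments produced in (a) into Lemmas~\ref{lem49} and~\ref{lem413}.

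For (a): fix $\rho_1\in\sscript(H_1)$ and $x\in\Omega_\iscript$. By definition $\iscript_x^1(\rho_1)=\tfrac1{n_2}\rmtr_2\sqbrac{\iscript_x(\rho_1\otimes 1_2)}$, and triviality of $\iscript$ gives $\iscript_x(\rho_1\otimes 1_2)=\rmtr\sqbrac{(\rho_1\otimes 1_2)A_x}\,\alpha$, so the scalar factors out and $\iscript_x^1(\rho_1)=\tfrac1{n_2}\rmtr\sqbrac{(\rho_1\otimes 1_2)A_x}\,\rmtr_2(\alpha)$. The one substantive step is the partial-trace identity $\rmtr\sqbrac{(\rho_1\otimes 1_2)A_x}=\rmtr\sqbrac{\rho_1\rmtr_2(A_x)}=n_2\,\rmtr(\rho_1 A_x^1)$, which is just the definition $A_x^1=\tfrac1{n_2}\rmtr_2(A_x)$; the factors of $n_2$ cancel and one is left with $\iscript_x^1(\rho_1)=\rmtr(\rho_1 A_x^1)\,\rmtr_2(\alpha)$. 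This is precisely a trivial instrument with observable $A^1$ and state $\rmtr_2(\alpha)$, the latter being a state since $\rmtr_1\sqbrac{\rmtr_2(\alpha)}=\rmtr(\alpha)=1$ and partial traces of positive operators are positive. The claim for $\iscript^2$ is the mirror-image computation using $1_1\otimes\rho_2$ and $A_x^2=\tfrac1{n_1}\rmtr_1(A_x)$.

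For (b): part (a) now supplies trivial instruments $\iscript^1\in\rmin(H_1)$ with observable $A^1$ and state $\rmtr_2(\alpha)$, and $\iscript^2\in\rmin(H_2)$ with observable $A^2$ and state $\rmtr_1(\alpha)$. Applying Lemma~\ref{lem413}(a) to this pair shows at once that $\jscript=\iscript^1\otimes\iscript^2$ is trivial with observable $A^1\otimes A^2$ and state $\rmtr_2(\alpha)\otimes\rmtr_1(\alpha)$. For the remaining identities, apply Lemma~\ref{lem49}(a) to $\jscript=\iscript^1\otimes\iscript^2$: it writes $\jscript_{(x,y)}^1$ in terms of $\iscript_x^1$ and $\jscript_{(x,y)}^2$ in terms of $\iscript_y^2$, with coefficients $\mu^{\iscript^2}(y)$ and $\mu^{\iscript^1}(x)$, and these are evaluated from (a) as $\mu^{\iscript^2}(y)=\tfrac1{n_2}\rmtr\sqbrac{\iscript_y^2(1_2)}=\tfrac1{n_2}\rmtr(A_y^2)=\tfrac1{n_1n_2}\rmtr(A_y)$ and symmetrically $\mu^{\iscript^1}(x)=\tfrac1{n_1n_2}\rmtr(A_x)$; since $\sum_x\rmtr(A_x)=\rmtr(1)=n_1n_2$, these weights sum to $1$ over the spectator index, so summing $\jscript^1$ over its second index (and $\jscript^2$ over its first) returns $\iscript_x^1$ and $\iscript_y^2$. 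Alternatively one can run Lemma~\ref{lem413}(b) in place of Lemma~\ref{lem49}(a) to reach the same descriptions.

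I expect the only real obstacle to be bookkeeping: making sure the $\tfrac1{n_2}$ in the reduced-instrument definition is cancelled by the $n_2$ coming from $A_x^1=\tfrac1{n_2}\rmtr_2(A_x)$ in (a), and keeping track of the random-measure weights that Lemma~\ref{lem49}(a) attaches to each outcome of a tensor-product instrument in (b). Nothing beyond linearity of operations and the partial-trace identities $\rmtr\sqbrac{(\rho_1\otimes 1_2)T}=\rmtr\sqbrac{\rho_1\rmtr_2(T)}$ and its $H_2$-analogue is needed.
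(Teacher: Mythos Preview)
Your argument for part~(a) is essentially identical to the paper's: unwind the definition of $\iscript_x^1$, factor the scalar through $\rmtr_2$, and use $\rmtr\sqbrac{(\rho_1\otimes 1_2)A_x}=\rmtr\sqbrac{\rho_1\rmtr_2(A_x)}=n_2\rmtr(\rho_1A_x^1)$. For the first assertion of part~(b) you cite Lemma~\ref{lem413}(a), while the paper's entire proof of~(b) is the single sentence ``This result follows from Lemma~\ref{lem413}(b)''; these amount to the same thing. (Minor point: the result you call ``Lemma~\ref{lem49}'' is Theorem~\ref{thm49} in the paper, so your reference would not resolve.)

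There is, however, a genuine mismatch in the ``Moreover'' clause of~(b). What Theorem~\ref{thm49}(a) and Lemma~\ref{lem413}(b) actually give is
\[
\jscript_{(x,y)}^1=\mu^{\iscript^2}(y)\,\iscript_x^1=\mu^{A^2}(y)\,\iscript_x^1,\qquad
\jscript_{(x,y)}^2=\mu^{A^1}(x)\,\iscript_y^2,
\]
with the weights $\mu^{A^2}(y)=\tfrac{1}{n_1n_2}\rmtr(A_y)$ that you correctly compute. You then \emph{sum over the spectator index} to obtain $\sum_y\jscript_{(x,y)}^1=\iscript_x^1$; but the lemma as stated asserts the identity pointwise in $(x,y)$, and since $\mu^{A^2}$ is a probability measure on $\Omega_\iscript$ one has $\mu^{A^2}(y)\ne 1$ whenever $\ab{\Omega_\iscript}>1$. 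So your argument does not establish the literal claim $\jscript_{(x,y)}^1=\iscript_x^1$ --- nor does the paper's one-line citation of Lemma~\ref{lem413}(b), which delivers exactly the same weighted formula. The discrepancy is in the statement of the lemma rather than in your reasoning; the correct pointwise assertion carries the factor $\mu^{A^2}(y)$ (respectively $\mu^{A^1}(x)$), and your summed version is the right way to recover $\iscript_x^1$ and $\iscript_y^2$ from $\jscript$.
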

\begin{proof}
(a)\enspace For all $\rho _1\in\sscript (H_1)$ and $x\in\Omega _\iscript$ we have that
\begin{align*}
\iscript _x^1(\rho _1)&=\tfrac{1}{n_2}\,\rmtr _2\sqbrac{\iscript _x(\rho _1\otimes 1_2)}
  =\tfrac{1}{n_2}\,\rmtr _2\brac{\rmtr\sqbrac{(\rho _1\otimes 1_2)A_x}\alpha}\\
  &=\tfrac{1}{n_2}\,\rmtr\sqbrac{(\rho _1\otimes 1_2)A_x}\rmtr _1(\alpha )=\tfrac{1}{n_2}\,\rmtr\sqbrac{\rmtr _2(A_x)\rho _1}\rmtr _2(\alpha )\\
  &=\rmtr\sqbrac{\rho _1\,\tfrac{1}{n_2}\,\rmtr _2(A_x)}\rmtr _2(\alpha )=\rmtr (\rho _1A_x^1)\rmtr _2(\alpha )
\end{align*}
Similarly, $\iscript _x^2(\rho _2)=\rmtr (\rho _2A_x^2)\rmtr _1(\alpha )$ so the result follows.
(b)\enspace This result follows from Lemma~\ref{lem413}(b).
\end{proof}

We now consider $MM$s for composite systems. A \textit{single probe} $MM$ on $H=H_1\otimes H_2$ has the form
$\mscript =(H,K,\eta ,\nu ,F)$ as defined before. As discussed earlier, $\mscripthat\in\rmin (H)$ is the instrument measured by $\mscript$.
Then $\mscripthat ^1\in\rmin (H_1)$ and for $\rho _1\in\sscript (H_1)$ we obtain
\begin{align}                % equation (4.4)
\label{eq44}
\mscripthat\, _x^1 (\rho _1)&=\tfrac{1}{n_2}\,\rmtr _2\sqbrac{\mscripthat _x(\rho _1\otimes 1_2)}\notag\\
   &=\tfrac{1}{n_2}\,\rmtr _2\brac{\rmtr _K\sqbrac{\nu (\rho _1\otimes 1_2\otimes\eta )(1_1\otimes 1_2\otimes F_x)}}
\end{align}
We have a similar expression for $\mscripthat ^2\in\rmin (H_2)$.

Corresponding to $\mscript$ we define the \textit{reduced} $MM$ $\mscript _1=(H_1,K,\eta ,\nu _1,F)$ where
$\nu _1\in\sscript (H_1\otimes K)$ is given by
\begin{equation*}
\nu _1(\rho _1\otimes\eta )=\tfrac{1}{n_2}\,\rmtr _2\sqbrac{\nu (\rho _1\otimes 1_2\otimes\eta )}
\end{equation*}
We then have for $\rho _1\in\sscript (H_1)$ that
\begin{align}                % equation (4.5)
\label{eq45}
\mscripthat _{1,x}(\rho _1)&=\rmtr _K\sqbrac{\nu _1(\rho _1\otimes\eta )(1_1\otimes F_x)}\notag\\
   &=\tfrac{1}{n_2}\,\rmtr _K\brac{\rmtr _2\sqbrac{\nu (\rho _1\otimes 1_2\otimes\eta )}(1_1\otimes F_x)}
\end{align}
Similarly, we define $\mscript _2=(H_2,K,\eta ,\nu _2,F)$ and an analogous formula for $\mscripthat _2$. Notice that \eqref{eq44} and
\eqref{eq45} are quite similar and they are essentially an interchange of the two partial traces. We now show that they coincide.

\begin{thm}    % Theorem 4.15
\label{thm415}
{\rm{(a)}}\enspace Let $H_1,H_2,H_3$ be finite-dimensional Hilbert spaces and let $A\in\lscript (H_1\otimes H_2\otimes H_3)$, $B\in\lscript (H_3)$. Then
\begin{equation}                % equation (4.6)
\label{eq46}
\rmtr _2\sqbrac{\rmtr _3\paren{A(1_1\otimes 1_2\otimes B)}}=\rmtr _3\sqbrac{\paren{\rmtr _2(A)}(1_1\otimes B)}
\end{equation}
{\rm{(b)}}\enspace $\mscripthat ^1=\mscripthat _1$ and $\mscripthat ^2=\mscripthat _2$.
\end{thm}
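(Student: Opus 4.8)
The plan is to prove (a) first as a purely linear-algebraic identity about partial traces, then deduce (b) by recognizing that the formulas \eqref{eq44} for $\mscripthat\,^1$ and \eqref{eq45} for $\mscripthat _1$ are exactly the two sides of \eqref{eq46} applied to a suitable operator. For part (a), I would verify \eqref{eq46} on a spanning set of $\lscript(H_1\otimes H_2\otimes H_3)$, namely simple tensors $A=A_1\otimes A_2\otimes A_3$ with $A_i\in\lscript(H_i)$; both sides of \eqref{eq46} are linear in $A$, so this suffices. On such an $A$ the left side becomes $\rmtr _2\sqbrac{\rmtr _3\paren{A_1\otimes A_2\otimes A_3B}}=\rmtr(A_3B)\,\rmtr(A_2)\,A_1$, and the right side becomes $\rmtr _3\sqbrac{\paren{\rmtr(A_2)A_1\otimes A_3}(1_1\otimes B)}=\rmtr(A_2)\,\rmtr(A_3B)\,A_1$, so the two agree. (One should note that $\rmtr _3$ on the left acts on the factor $H_3$ sitting in the third slot, while after applying $\rmtr _2$ the operator lives on $H_1\otimes H_3$ and $\rmtr _3$ there is the trace over its second tensor factor — the identification is the obvious one and the scalar bookkeeping is identical either way.)

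For part (b), take $H_3=K$ and apply \eqref{eq46} with $A=\nu(\rho _1\otimes 1_2\otimes\eta)$ and $B=F_x$. The left side of \eqref{eq46} is then $\rmtr _2\sqbrac{\rmtr _K\paren{\nu(\rho _1\otimes 1_2\otimes\eta)(1_1\otimes 1_2\otimes F_x)}}$, which up to the factor $\tfrac{1}{n_2}$ is precisely the expression for $\mscripthat\,_x^1(\rho _1)$ in \eqref{eq44}. The right side of \eqref{eq46} is $\rmtr _K\sqbrac{\paren{\rmtr _2\bigl(\nu(\rho _1\otimes 1_2\otimes\eta)\bigr)}(1_1\otimes F_x)}$, which up to the same factor $\tfrac{1}{n_2}$ is the expression for $\mscripthat _{1,x}(\rho _1)$ in \eqref{eq45}, once one recalls that $\nu _1(\rho _1\otimes\eta)=\tfrac{1}{n_2}\rmtr _2\sqbrac{\nu(\rho _1\otimes 1_2\otimes\eta)}$. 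Hence $\mscripthat\,_x^1(\rho _1)=\mscripthat _{1,x}(\rho _1)$ for all $\rho _1\in\sscript(H_1)$ and all $x\in\Omega _F$, so $\mscripthat\,^1=\mscripthat _1$; the argument for $\mscripthat\,^2=\mscripthat _2$ is identical with the roles of $H_1$ and $H_2$ interchanged.

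I expect the only real subtlety — and it is mild — to be in part (a): keeping straight which tensor slot each partial trace acts on, since on the left of \eqref{eq46} we trace out $H_3$ from a three-fold tensor product and then trace out $H_2$, whereas on the right we first trace out $H_2$ (leaving an operator on $H_1\otimes H_3$) and then trace out that remaining $H_3$ factor. Once one fixes the convention that $\rmtr _3$ always means "trace over the $H_3$-factor wherever it currently sits," the verification on simple tensors is immediate and the rest is bookkeeping. No deeper obstacle is anticipated: part (b) is then a direct substitution into the already-derived formulas \eqref{eq44} and \eqref{eq45}.
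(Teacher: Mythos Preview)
Your proposal is correct and follows essentially the same route as the paper: verify \eqref{eq46} on simple tensors $A=A_1\otimes A_2\otimes A_3$ and extend by linearity, then deduce (b) by substituting $A=\nu(\rho_1\otimes 1_2\otimes\eta)$, $B=F_x$, $H_3=K$ into \eqref{eq46} to identify \eqref{eq44} with \eqref{eq45}. Your explicit remark about tracking which slot each partial trace acts on is a helpful clarification that the paper leaves implicit.
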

\begin{proof}
(a)\enspace First suppose that $A=A_1\otimes A_2\otimes A_3$ is factorized. We then obtain
\begin{align*}
\rmtr _2\sqbrac{\rmtr _3\paren{A(1_1\otimes 1_2\otimes B)}}
  &=\rmtr _2\sqbrac{\rmtr _3\paren{A_1\otimes A_2\otimes A_3(1_1\otimes 1_2\otimes B)}}\\
  &=\rmtr _2\sqbrac{\rmtr _3(A_1\otimes A_2\otimes A_3B)}\\
  &=\rmtr _2\sqbrac{A_1\otimes A_2\rmtr (A_3B)}=\rmtr (A_3B)\rmtr _2(A_1\otimes A_2)\\
  &=\rmtr (A_3B)\rmtr (A_2)A_1=\rmtr (A_2)\rmtr _3(A_1\otimes A_3B)\\
  &=\rmtr _3\sqbrac{\rmtr (A_2)(A_1\otimes A_3)(1_1\otimes B)}\\
  &=\rmtr _3\sqbrac{\paren{\rmtr _2(A_1\otimes A_2\otimes A_3)}(1_1\otimes B)}\\
  &=\rmtr _3\sqbrac{\paren{\rmtr _2(A)}(1_1\otimes B)}
\end{align*}
Hence, \eqref{eq46} holds when $A$ is factorized. Since any $A\in\lscript (H_1\otimes H_2\otimes H_3)$ is a linear combination of factorized operators, \eqref{eq46} holds in general.
(b)\enspace Letting $A=\nu (\rho _1\otimes 1_2\otimes\eta )$, $B=F_x$ and $K=H_3$ in \eqref{eq46}, we conclude that \eqref{eq44} and \eqref{eq45} coincide. Hence, $\mscripthat ^1=\mscripthat _1$ and similarly, $\mscripthat ^2=\mscripthat _2$.
\end{proof}

We have considered single probe composite $MM$s. We now briefly discuss general composite $MM$s. Let
$\mscript _i=(H_i,K_i,\eta _i,\nu _i,F_i)$, $i=1,2$ be two $MM$s. Define the unitary \textit{swap operator} \cite{hz12}
\begin{align*}
U\colon H_1\otimes H_2\otimes K_1\otimes K_2\to H_1\otimes K_1\otimes H_2\otimes K_2\\
\intertext{by}
U(\phi _1\otimes\phi _2\otimes\psi _1\otimes\psi _2)=\phi _1\otimes\psi _1\otimes\phi _2\otimes\psi _2
\end{align*}
We now define the channel $\nu _1\otimes\nu _2\in\cscript (H_1\otimes H_2\otimes K_1\otimes K_2)$ by
\begin{equation}                % equation (4.7)
\label{eq47}
\nu _1\otimes\nu _2(\rho _1\otimes\rho _2\otimes\eta _1\otimes\eta _2)
  =U^*\sqbrac{\nu _1(\rho _1\otimes\eta _2)\otimes\nu _2(\rho _2\otimes\eta _2)}U
\end{equation}
The \textit{composite} of $\mscript _1$ and $\mscript _2$ is declared to be
\begin{equation*}
\mscript =\mscript _1\otimes\mscript _2=(H_1\otimes H_2,K_1\otimes K_2,\eta _1\otimes\eta _2,\nu _1\otimes\nu _2,F_1\otimes F_2)
\end{equation*}
For $\rho\in\sscript (H_1\otimes H_2)$ we have that
\begin{equation}                % equation (4.8)
\label{eq48}
\mscripthat _{(x,y)}(\rho )
  =\rmtr _{K_1\otimes K_2}\sqbrac{\nu _1\otimes\nu _2(\rho\otimes\eta _1\otimes\eta _2)(1_1\otimes 1_2\otimes F_{1,x}\otimes F_{2,y})}
\end{equation}
The next result shows that $\mscript$ has desirable properties.

\begin{thm}    % Theorem 4.16
\label{thm416}
{\rm{(a)}}\enspace For $\rho =\rho _1\otimes\rho _2\in\sscript (H_1\otimes H_2)$ we have
\begin{equation*}
\mscripthat _{(x,y)}(\rho )=\mscripthat _{1,x}(\rho _1)\otimes\mscripthat _{2,y}(\rho _2)
\end{equation*}
{\rm{(b)}}\enspace Defining $\mscripthat ^1$ and $\mscripthat ^2$ in the usual way we obtain
\begin{align*}
\mscripthat _{(x,y)}^1(\rho _1)&=\tfrac{1}{n_2}\,\rmtr\sqbrac{\mscripthat _{2,y}(1_2)}\mscripthat _{1,x}(\rho _1)\\
\intertext{and}
\mscripthat _{(x,y)}^2(\rho _2)&=\tfrac{1}{n_1}\,\rmtr\sqbrac{\mscripthat _{1,x}(1_1)}\mscripthat _{2,y}(\rho _2)
\end{align*}
for all $\rho _1\in\sscript (H_1)$, $\rho _2\in\sscript (H_2)$.
\end{thm}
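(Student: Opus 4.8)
The plan is to prove (a) by direct computation from the definitions and then deduce (b) by evaluating (a) at the maximally mixed probe-side state $1_2/n_2$.

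For part (a), substitute $\rho=\rho_1\otimes\rho_2$ into \eqref{eq48} and write $\sigma=\nu_1(\rho_1\otimes\eta_1)\in\sscript(H_1\otimes K_1)$, $\tau=\nu_2(\rho_2\otimes\eta_2)\in\sscript(H_2\otimes K_2)$, so that by \eqref{eq47} the global post-interaction state is $\nu_1\otimes\nu_2(\rho\otimes\eta_1\otimes\eta_2)=U^*(\sigma\otimes\tau)U$. The key algebraic fact is that the swap operator converts a ``grouped-by-system'' product operator into the corresponding ``grouped-by-type'' one: checking on product vectors gives $U(A_1\otimes A_2\otimes B_1\otimes B_2)U^*=A_1\otimes B_1\otimes A_2\otimes B_2$ for $A_i\in\lscript(H_i)$, $B_i\in\lscript(K_i)$, and in particular $U(1_1\otimes1_2\otimes F_{1,x}\otimes F_{2,y})U^*=1_1\otimes F_{1,x}\otimes1_2\otimes F_{2,y}$. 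Inserting $U^*U=1$ inside the trace in \eqref{eq48} and using that the partial trace over $K_1\otimes K_2$ merely permutes through $U$, the operator whose partial trace is taken becomes $\sqbrac{\sigma(1_1\otimes F_{1,x})}\otimes\sqbrac{\tau(1_2\otimes F_{2,y})}$, now viewed as an operator on $(H_1\otimes K_1)\otimes(H_2\otimes K_2)$. Since this is a product across that splitting, the partial trace over the probe factors factorizes as $\rmtr _{K_1}\!\sqbrac{\sigma(1_1\otimes F_{1,x})}\otimes\rmtr _{K_2}\!\sqbrac{\tau(1_2\otimes F_{2,y})}$, which by \eqref{eq22} applied separately to $\mscript_1$ and $\mscript_2$ equals $\mscripthat_{1,x}(\rho_1)\otimes\mscripthat_{2,y}(\rho_2)$.

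For part (b), write $1_2=n_2\tau_2$ with $\tau_2=1_2/n_2\in\sscript(H_2)$; then by part (a) and positive homogeneity of operations, $\mscripthat_{(x,y)}(\rho_1\otimes1_2)=n_2\,\mscripthat_{(x,y)}(\rho_1\otimes\tau_2)=n_2\,\mscripthat_{1,x}(\rho_1)\otimes\mscripthat_{2,y}(\tau_2)=\mscripthat_{1,x}(\rho_1)\otimes\mscripthat_{2,y}(1_2)$. The definition of the reduced instrument then gives
\[
\mscripthat_{(x,y)}^1(\rho_1)=\tfrac{1}{n_2}\,\rmtr _2\!\sqbrac{\mscripthat_{(x,y)}(\rho_1\otimes1_2)}=\tfrac{1}{n_2}\,\rmtr _2\!\sqbrac{\mscripthat_{1,x}(\rho_1)\otimes\mscripthat_{2,y}(1_2)}=\tfrac{1}{n_2}\,\rmtr\!\sqbrac{\mscripthat_{2,y}(1_2)}\mscripthat_{1,x}(\rho_1),
\]
and the formula for $\mscripthat_{(x,y)}^2$ follows by the symmetric argument, tracing over $H_1$.

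The step I expect to be the main obstacle is the bookkeeping of the tensor-factor reshuffling performed by $U$: one must make precise that conjugation by the swap carries a system-grouped product operator to the corresponding type-grouped product operator, and that the partial trace over $K_1\otimes K_2$ commutes with this permutation of factors. I would dispatch this once with the explicit product-vector computation above, extending to general operators by linearity, since every operator is a linear combination of product operators and $\mscripthat_{(x,y)}$, $\nu_1$, $\nu_2$ are linear; once this is settled, (a) and (b) are routine.
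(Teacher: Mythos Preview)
Your proposal is correct and follows essentially the same route as the paper: for (a) you substitute \eqref{eq47} into \eqref{eq48}, use the swap identity $U(1_1\otimes 1_2\otimes F_{1,x}\otimes F_{2,y})U^*=1_1\otimes F_{1,x}\otimes 1_2\otimes F_{2,y}$ to factorize the operator, and then split the partial trace over $K_1\otimes K_2$; for (b) you plug $\rho_1\otimes 1_2$ into (a) and take $\rmtr_2$, exactly as the paper does. Your extra care in justifying the swap bookkeeping on product vectors and the homogeneity step for the non-state input $1_2$ only makes explicit what the paper leaves implicit.
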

\begin{proof}
(a)\enspace Applying \eqref{eq47} and \eqref{eq48} we obtain
\begin{align*}
&\mscripthat _{(x,y)}(\rho )\\
  &\quad =\rmtr _{K_1\otimes K_2}\left[U^*\nu _1(\rho _1\otimes\eta _1)\otimes\nu _2(\rho _2\times\eta _2)\right.
  \left. U(1_1\otimes 1_2\otimes F_{1,x}\otimes F_{2,y})\right]\\
  &\quad =\rmtr _{K_1\otimes K_2}\left[\nu _1(\rho _1\otimes\eta _1)\otimes\nu _2(\rho _2\otimes\eta _2)\right.
  \left. U(1_1\otimes 1_2\otimes F_{1,x}\otimes F_{2,y})U^*\right]\\
  &\quad =\rmtr _{K_1}\rmtr _{K_2}\left[\nu _1(\rho _1\otimes\eta _1)\otimes\nu _2(\rho _2\otimes\eta _2)\right.
  \left. (1_1\otimes F_{1,x}\otimes 1_2\otimes F_{2,y})\right]\\
  &\quad =\rmtr _{K_1}\rmtr _{K_2}\left[\nu _1(\rho _1\otimes\eta _1)(1_1\otimes F_{1,x})\right.
  \left.\otimes\nu _2(\rho _2\otimes\eta _2)(1_2\otimes F_{2,y})\right]\\
  &\quad =\mscripthat _{1,x}(\rho _1)\otimes\mscripthat _{2,y}(\rho _2)
\end{align*}
(b)\enspace For $\rho _1\in\sscript (H_1)$ we have that
\begin{align*}
\mscript _{(x,y)}^1(\rho _1)&=\tfrac{1}{n_2}\,\rmtr _2\sqbrac{\mscripthat _{(x,y)}(\rho _1\otimes 1_2)}
  =\tfrac{1}{n_2}\,\rmtr _2\sqbrac{\mscripthat _{1,x}(\rho _1)\otimes\mscripthat _{2,y}(1_2)}\\
  &=\tfrac{1}{n_2}\,\rmtr\sqbrac{\mscripthat _{2,y}(1_2)}\mscripthat _{1,x}(\rho _1)
\end{align*}
The expression for $\mscripthat ^2$ is similar.
\end{proof}

\end{document}